\newcommand{\quer}{Q}  %Original query range
\newcommand{\quert}{Q'}
\newcommand{\quertt}{Q''} 
\renewcommand{\Re}{\mathbb{R}}
\title{Coloring Planar Homothets and Three-Dimensional Hypergraphs}
\author{Jean Cardinal \and Matias Korman}
\institute{
Universit\'e Libre de Bruxelles (ULB)\\
Brussels, Belgium\\
{\tt\{jcardin,mkormanc\}@ulb.ac.be}
}
\begin{document}
\maketitle
\linenumbers

%\sloppy
\begin{abstract}
We prove that every finite set of homothetic copies of a given compact and convex body in the plane can be colored with four colors so that any point covered by at least two copies is covered by two copies with distinct colors. This generalizes a previous result from Smorodinsky (SIAM J. Disc. Math. 2007). 
Then we show that for any $k\geq 2$, every three-dimensional hypergraph can be colored with $6(k-1)$ colors so that every hyperedge $e$ contains $\min\{ |e|,k \}$ vertices with mutually distinct colors. This refines a previous result from Aloupis {\em et al.} (Disc. \& Comp. Geom. 2009).  As corollaries, we improve on previous results for conflict-free coloring, $k$-strong conflict-free coloring, and choosability. Proofs of the upper bounds are constructive and yield simple, polynomial-time algorithms.
\end{abstract}

\section{Introduction}

The well-known graph coloring problem has several natural generalizations to hypergraphs. A rich literature exists on these topics; in particular, the two-colorability of hypergraphs (also known as property B), has been studied since the sixties. In this paper, we concentrate on coloring geometric hypergraphs, defined by simple objects in the plane. Those hypergraphs serve as models for wireless sensor networks, and associated coloring problems have been investigated recently. This includes conflict-free colorings%: colorings such that in every hyperedge, there is a vertex whose color is distinct from all the others
~\cite{shakharcf,HPS05}, and covering decomposition problems~\cite{pachtoth,pachindecomp,GV09}.

Smorodinsky~\cite{Smo07} investigated the chromatic number of geometric hypergraphs, defined as the minimum number of colors required to make every hyperedge non-monochromatic. He considered hypergraphs induced by a collection $S$ of regions in the plane, whose vertex set is $S$, and the hyperedges are all subsets $S'\subseteq S$ for which there exists a point $p$ such that $S'= \{ R \in S: p\in R \}$. He proved the following result.
\begin{theorem}\label{thm_4col}
\begin{itemize}
\item Any hypergraph induced by a family of n simple Jordan regions in the plane such that the union complexity of any $m$ of them is given by $u(m)$ and $u(m)/m$ is non-decreasing is $O(u(n)/n)$-colorable so that no hyperedge is monochromatic. In particular, any finite family of pseudodisks can be colored with $O(1)$  colors.
\item Any hypergraph induced by a finite family of disks is 4-colorable
\end{itemize}
\end{theorem}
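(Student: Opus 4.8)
The plan is to reduce both statements to ordinary graph colouring by means of an onion-peeling of the family, reading the number of colours off the union complexity. Fix a family $S$ of $n$ regions with union complexity function $u$, $u(m)/m$ non-decreasing, and order its members $R_1,\dots,R_n$ so that, writing $F_i=\{R_1,\dots,R_i\}$, each $R_i$ is an \emph{exposed} region of $F_i$, i.e.\ contributes at least one arc to $\partial(\bigcup F_i)$; such an order exists because from any nonempty family one can peel off an exposed region (some member must appear on the boundary of the union). Colour the regions in the order $R_1,R_2,\dots$: when $R_i$ is treated, give it any colour distinct from the colours of those $R_j$ with $j<i$ that \emph{conflict} with $R_i$, where $R_j$ conflicts with $R_i$ if some point is covered, within $F_i$, by exactly the two regions $R_i,R_j$. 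This colouring is valid: given a hyperedge $e=\{R\in S:p\in R\}$ with $|e|\ge 2$, let $R_a,R_b$ ($a<b$) be the two members of $e$ of smallest index; within $F_b$ the point $p$ is covered by exactly $R_a$ and $R_b$ (any other region of $F_b$ through $p$ would lie in $e$ with index $<b$), so $R_a$ conflicts with $R_b$, the two receive distinct colours, and $e$ is non-monochromatic.

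Bounding the palette is the technical core. When $R_i$ is coloured, a conflicting predecessor $R_j$ is precisely a region whose exclusive part in $\mathcal{A}(F_{i-1})$ — the union of the depth-one cells it owns — meets $R_i$; equivalently $R_i\cap R_j$ contains a point of $\mathcal{A}(F_i)$ of depth exactly two. At each step I would peel an exposed region $R_i$ that contributes as few arcs as possible to $\partial(\bigcup F_i)$, which by averaging is $O(u(i)/i)=O(u(n)/n)$ arcs, using that $u(m)/m$ is non-decreasing. The claim to establish — and the step I expect to be the main obstacle — is that the conflicting predecessors of such an $R_i$ fall into only $O(u(n)/n)$ colour classes; note this is much weaker than bounding their number, which can be large (take one big disk containing many pairwise-disjoint small disks, all of which conflict only with the big one and hence share a colour). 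One would argue it from a Clarkson–Shor estimate on the number of $\le 2$-deep cells of $\mathcal{A}(F_i)$, combined with the interaction between depth and the peeling order. Granting this, the greedy uses $O(u(n)/n)$ colours; for pseudodisks $u(n)=O(n)$, giving $O(1)$ colours, which is the first part.

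For disks I would sharpen the last step to the constant $4$ using planarity. Let $G$ be the graph on $S$ consisting of all conflicting pairs produced above; by the validity argument every hyperedge of size $\ge 2$ spans an edge of $G$, so any proper colouring of $G$ colours the hypergraph, and it suffices to show $\chi(G)\le 4$. The plan is to realise $G$ as a plane graph for disks — drawing each disk at a point of its exclusive cell and routing each conflict edge through the witnessing depth-two cell — and then invoke the Four Colour Theorem; alternatively, a discharging/induction on $|S|$ using the linear union bound $6|S|-12$ and the pseudodisk property that two boundary curves meet at most twice should let one pick the peeling order so that every newly treated disk has conflicting predecessors in at most $3$ colour classes, again yielding $4$. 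Proving this planarity — equivalently, excluding a $K_5$- or $K_{3,3}$-type obstruction among the conflicts of disks — is the delicate point, and is precisely where moving from disks to arbitrary pseudodisks (and hence to homothets, the subject of the paper's own four-colour theorem) calls for additional ideas.
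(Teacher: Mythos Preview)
This theorem is not proved in the present paper; it is quoted from Smorodinsky~\cite{Smo07} as prior work. The paper's own contribution is the generalisation of the second item from disks to homothets of an arbitrary compact convex body (Theorem~\ref{theo_dual}), and for that it follows precisely the high-level route you sketch for disks: construct a graph on the regions such that every hyperedge of size at least two contains an edge, prove that graph is planar, and apply the Four Colour Theorem. The planarity is obtained by lifting each homothet $\quer(t,\lambda)$ to the point $(t,\lambda)\in\Re^3$ and taking the Delaunay graph with respect to the associated cones (Lemma~\ref{lem_dualplanar}), with an alternative description via additively weighted Voronoi diagrams (Section~\ref{sec:wVD}); this is exactly the ``additional idea'' you anticipate in your last sentence.

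For the first item your scheme has a genuine gap, and the difficulty is structural rather than technical. Your order-dependent conflict graph $H$ is a \emph{supergraph} of the order-free Delaunay graph $G_S$ (pairs $R,R'$ for which some point of the plane lies in exactly $R$ and $R'$ among all of $S$): a depth-two witness in $F_i$ need not be a depth-two witness in $S$, so $H$ may contain many edges that $G_S$ does not. Bounding, at step $i$, the number of colour classes occupied by the $H$-neighbours of $R_i$ is therefore hard for the reason you yourself flag --- those colours were assigned earlier with no reference to $R_i$ --- and the boundary-arc count of $R_i$ controls neither quantity. Smorodinsky's argument avoids the ordering altogether: one shows (via a Clarkson--Shor bound on cells of depth at most two) that $G_{S'}$ has $O(u(|S'|))$ edges for every subfamily $S'$, and since any edge of $G_S$ between two regions of $S'$ persists in $G_{S'}$, every induced subgraph of $G_S$ on $m$ vertices has $O(u(m))\le O(m\cdot u(n)/n)$ edges; hence $G_S$ is $O(u(n)/n)$-degenerate and $O(u(n)/n)$-colourable. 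Your validity argument already shows that a proper colouring of $G_S$ (a subgraph of your $H$) suffices for the hypergraph, so replacing $H$ by $G_S$ and the peeling order by the degeneracy order of $G_S$ closes the gap. For disks the same $G_S$ is planar, and that --- not the larger $H$ --- is the graph to which the Four Colour Theorem is applied.
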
 
Later, Aloupis, {\em et al.}~\cite{ACCLS09} considered the quantity $c(k)$, defined as the minimum number of colors required to color a given hypergraph, such that every hyperedge of size $r$ has at least $\min \{ r, k\}$ vertices with distinct colors. For hypergraphs induced by a collection of regions in the plane, such that no point is covered more than $k$ times (a $k$-fold packing), this number corresponds to the minimum number of (1-fold) packings into which we can decompose this collection. It generalizes the usual chromatic number, equal to $c(2)$. They proved the following.
\begin{theorem}
\label{thm_ck}
Any finite family of pseudodisks in the plane can be colored with $24k+1$ colors in a way that any point covered by $r$ pseudodisks is covered by $\min \{r, k\}$ pseudodisks with distinct colors.
\end{theorem}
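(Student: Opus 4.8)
The plan is to prove Theorem~\ref{thm_ck} by an iterative \emph{peeling} argument in which, over roughly $k$ rounds, we repeatedly extract a ``shallow'' subfamily of pseudodisks, color it with a bounded number of fresh colors, and delete it. The only geometric facts I will need about pseudodisks are the two already underlying Theorem~\ref{thm_4col}: the union of $n$ pseudodisks has linear complexity (about $6n$ boundary arcs), and hence, via the Clarkson--Shor technique, the $(\le \ell)$-level of an arrangement of $n$ pseudodisks has only $O(\ell n)$ vertices; relatedly, the first-order Delaunay graph of any subfamily is planar, hence $4$-colorable.

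The engine of a single round is a \emph{shallow hitting set lemma}: for any finite family $S$ of pseudodisks there is a subfamily $N \subseteq S$ of depth at most an absolute constant $c$ such that every point covered at least twice by $S$ is covered at least once by $N$. I would obtain $N$ by an exchange argument over the cells of the region of depth $\ge 2$ in the arrangement of $S$: since the $1$-level has complexity $O(n)$, that region is bounded by $O(n)$ arcs, and selecting a constant number of pseudodisks per shallow boundary cell hits every deep cell without piling up depth anywhere. Since $N$ has depth at most $c$, it can be colored with $O(c)$ colors so as to be automatically \emph{fully colorful} — every point of depth $r$ in $N$ sees $r$ distinct colors — because any $r$ pseudodisks through a common point pairwise intersect and the intersection graph of a depth-$\le c$ pseudodisk family has only $O(cn)$ edges (each intersecting pair charges to an arrangement vertex or a nesting, of which there are $O(cn)$), hence is $O(c)$-degenerate; alternatively one iterates the planar Delaunay $4$-coloring a bounded number of times. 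The precise size of these two constants is exactly what yields the number $24$.

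The coloring is then assembled as follows. For $i = 1, \dots, k-1$, apply the lemma to $S_i := S \setminus (N_1 \cup \cdots \cup N_{i-1})$ to get a shallow hitting set $N_i$, color it fully colorfully with a fresh palette of $24$ colors, and at the end color every remaining pseudodisk with one additional color; this last color is the ``$+1$''. A point $p$ of depth $D$ in $S$ loses at least one of its covering pseudodisks in each round in which its current depth is still $\ge 2$, and in that round gains exactly that many new, mutually distinct colors. A short case analysis finishes the argument: if $D$ is small then $p$ is fully absorbed within its first $D$ rounds and sees all $D$ of its colors; if $D$ is large then $p$ either stays at depth $\ge 2$ through all $k-1$ rounds and receives the final color too, or is absorbed earlier having already collected $\ge D-1 \ge k-1$ colors, so in every case $p$ sees at least $\min\{D,k\}$ colors. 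This uses $24(k-1)+1 \le 24k+1$ colors, and a slightly more careful accounting of the rounds gives exactly the stated bound; every step — a greedy shallow hitting set, a degeneracy coloring of a shallow layer — runs in polynomial time.

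The hard part will be the shallow hitting set lemma with a small enough constant. A generic $\varepsilon$-net argument costs an extra $\log n$ factor and destroys the linear dependence on $k$; avoiding it forces a quantitative use of the linear union complexity of pseudodisks, charging each selected pseudodisk to a bounded number of low-level cells of the depth-$\ge 2$ region, and then balancing two competing pressures on the hidden constant — a small depth bound $c$ makes each layer cheap to color but a shallow hitting set harder to build, and conversely. Pinning this trade-off down to $24$ colors per round, rather than merely $O(1)$, is the delicate part; the rest is bookkeeping.
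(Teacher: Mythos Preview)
This theorem is not proved in the paper at all. It is quoted in the introduction as a prior result of Aloupis \emph{et al.}~\cite{ACCLS09}, alongside Smorodinsky's Theorem~\ref{thm_4col}, to set up the context for the paper's own contributions (Theorems~\ref{theo_primal}, \ref{theo_dual}, and~\ref{theo_colz}). There is therefore no ``paper's own proof'' to compare your proposal against; the paper simply cites the bound $24k+1$ and then improves it to $6(k-1)$ in the special case of three-dimensional hypergraphs (in particular, homothets of a fixed triangle) via the edge-counting argument of Lemmas~\ref{lem_boundeg} and~\ref{lem_edges}.

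As for your sketch on its own terms: the peeling-by-shallow-layers idea is in the right spirit for the Aloupis \emph{et al.} result, and you are candid that the crux --- a shallow hitting set of absolute-constant depth, with constants tight enough to land on $24$ per round --- is left open. That is indeed the whole difficulty; everything else in your outline is routine. But note that the paper's own method for the $6(k-1)$ bound is structurally different: it does \emph{not} peel layers at all, but instead builds a single graph $G_k(S)$ (the union of cliques over all hyperedges of size $\le k$), bounds its edge count to $3(k-1)n-6$ by an extremal argument on ``extreme'' hyperedges, and colors it greedily by degeneracy. If you are ultimately aiming at the paper's results rather than at reproducing~\cite{ACCLS09}, that is the technique to study.
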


\paragraph{Our results.}
We show in Section~\ref{sec_dual} that the second statement of Theorem~\ref{thm_4col} actually holds for homothets of any compact and convex body in the plane. The proof uses a lifting transformation that allows us to identify a planar graph, such that every hyperedge of the initial hypergraph contains an edge of the graph. The result then follows from the Four Color Theorem. 

We actually give two definitions of this graph: one is based on a weighted Voronoi diagram construction, while the other relates to Schnyder's characterization of planar graphs. Schnyder showed that a graph is planar if and only if its vertex-edge incidence poset has dimension at most $3$~\cite{schnyder}. In Section~\ref{sec_3D}, we show that the chromatic number $c(k)$ of three-dimensional hypergraphs is at most $6(k-1)$. This improves the constant of Theorem~\ref{thm_ck} for this special case, which includes hypergraphs induced by homothets of a triangle. 

In Section~\ref{sec_lb}, we give a lower bound for all the above problems. %More precisely, we prove that $c(k)\geq 4\lfloor k/2\rfloor$ for geometric hypergraphs defined by homothets of a compact and convex shape having at least 3 distinct normal directions. Since this lower bound also holds for triangles, the bound extends to 3-dimensional hypergraphs. 
 Finally, in Section~\ref{sec_appl}, we give some corollaries of these results involving other types of colorings, namely conflict-free and $k$-strong conflict-free colorings, and choosability. 

\paragraph{Definitions.}
We consider hypergraphs defined by {\em ranges}, which are compact and convex bodies $\quer \subset \Re^d$ containing the origin. 
The {\em scaling} of $\quer$ by a factor $ \lambda\in\Re$ is the set $\{\lambda x : x\in \quer \}$. Note that, the scaling of $\quer$ with $\lambda=-1$ is the reflection of $\quer$ around the origin. 
The {\em translate} of $\quer$ by a vector $t\in\Re^d$ is the set $\{x+t : x\in \quer \}$.
The {\em homothet} of $\quer$ of {\em center} $t$ and {\em scaling} $\lambda$ is the set $\{\lambda x + t : x\in \quer \}$ and is denoted by $\quer(t,\lambda)$. Given a finite collection $S$ of points in $\Re^d$, the {\em primal hypergraph} defined by these points and a range $\quer$ has $S$ as vertex set, and $\{ S\cap \quer' : \quer'\mathrm{\ homothet\ of\ }\quer \}$ as hyperedge set. Similarly, the {\em dual hypergraph} defined by a finite set $S$ of homothets of $\quer$ has $S$ as vertex set, and the hyperedges are all subsets $S'\subseteq S$ for which there exists a point $p\in\Re^d$ such that $S'= \{ R \in S: p\in R \}$ (i.e., the set of ranges of $S$ that contain $p$). While we give these definitions for an arbitrary dimension $d$, we will be mostly concerned by the case $d=2$. 

For a given range $\quer$, the chromatic number $c_{\quer} (k)$ is the minimum number $c$ such that every primal hypergraph (induced by a set of points) can be colored with $c$ colors, so that every hyperedge of size $r$ contains $\min \{r, k\}$ vertices with mutually distinct colors. Similarly, the chromatic number $\bar{c}_{\quer} (k)$ is the smallest number $c$ such that every dual hypergraph (induced by a set of homothets of $\quer$) can be $c$-colored so that every hyperedge of size $r$ contains $\min \{r, k\}$ vertices with mutually distinct colors. In what follows, we refer to these two coloring problems as {\em primal} and {\em dual}, respectively. Such colorings are called {\em polychromatic}\footnote{The term {\em $k$-colorful} is also used in the literature~\cite{Smo07}.}.

\section{Coloring Primal Hypergraphs}\label{sec_primal}
As a warm-up, we consider the primal version of the problem for $k=2$. %That is, we are given a set of points $S$ and a range $\quer\subseteq \Re^2$. We are interested in finding a coloring of the points of $S$ with few colors such that any homothet of $\quer$ that contains two or more points of $S$ contains two points of different colors.% We are interested in a coloring that uses the minimum number of colors.
Given a set of points $S\subset\mathbb{R}^2$ and a two-dimensional range $\quer$, the {\em Delaunay graph} of $S$ induced by $\quer$ is the graph $G_Q(S)=(S,E)$ with  $S$ as vertex set~\cite{fortune}. For any two points $p,q\in S$, $pq\in E$ if and only if there exists a homothet $\quert$ of $\quer$ such that $\quert \cap S =\{p,q\}$. Note that, the Delaunay graph induced by disks in the plane corresponds to the ordinary Delaunay triangulation, which is planar. In fact, planarity holds for many ranges.
\begin{lemma}~\cite{BCCS08c,sarioz}\label{lem_planar}
For any convex range $\quer\subseteq \Re^2$ and set of points $S$, $G_Q(S)$ is planar.
\end{lemma}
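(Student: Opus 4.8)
The plan is to exhibit a plane drawing of $G_Q(S)$ by drawing every edge as the straight segment joining its two endpoints, and then to show that this drawing has no crossings. It helps to first normalize the witnesses: for each edge $pq\in E$ pick a homothet $Q'$ with $Q'\cap S=\{p,q\}$ and enlarge it slightly --- say, scale it by a factor $1+\varepsilon$ about one of its interior points, with $\varepsilon$ small enough that no further point of $S$ is captured, which is possible since $S$ is finite and each other point lies at positive distance from the closed set $Q'$. This yields a homothet $Q_{pq}$ with $Q_{pq}\cap S=\{p,q\}$ in whose interior both $p$ and $q$ lie, so by convexity the segment $pq$ lies in the interior of $Q_{pq}$. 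Two byproducts: no vertex $r\notin\{p,q\}$ lies on the segment $pq$ (else $r\in Q_{pq}\cap S$), and two edges incident to a common vertex meet, as drawn, only at that vertex (if they overlapped, one endpoint would lie on the other edge, contradicting its witness). Hence the only threat to planarity is a pair of edges $pq,rs$ with four distinct endpoints whose segments cross at an interior point $x$.

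Suppose such a crossing exists. Then $r,s\notin Q_{pq}$ and $p,q\notin Q_{rs}$, so neither homothet contains the other, while $x\in pq\subseteq Q_{pq}$ and $x\in rs\subseteq Q_{rs}$, so $x$ lies in the interior of $K:=Q_{pq}\cap Q_{rs}$. The geometric crux is the classical fact that homothetic copies of a fixed planar convex body are \emph{pseudodisks}: if neither of two such copies $A,B$ contains the other, then $A\setminus B$ and $B\setminus A$ are connected. (I would prove this by regarding the homothets of $Q$ as the metric balls of the possibly asymmetric convex distance function with unit ball $Q$ and noting that two such balls, being translates and scalings rather than rotations of one another, cannot have either set difference split in two; one can also simply cite it as folklore in the theory of arrangements.) Granting it, $\partial K$ is the concatenation of a single arc $\alpha\subseteq\partial Q_{pq}$ and a single arc $\beta\subseteq\partial Q_{rs}$. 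Now $pq\cap K=pq\cap Q_{rs}$ is a subsegment $[a,b]$ of $pq$, and since $p,q\notin Q_{rs}$ the points $a,b$ lie strictly between $p$ and $q$, hence in the interior of $Q_{pq}$ and on $\partial Q_{rs}$, so in the relative interior of $\beta$; symmetrically $rs\cap K=[c,d]$ with $c,d$ in the relative interior of $\alpha$. Since $p,q$ are interior to $Q_{pq}$ and $r,s$ to $Q_{rs}$, the point $x$ is in the relative interior of both $[a,b]$ and $[c,d]$, so these are two transversally crossing chords of the convex region $K$. The endpoints of two crossing chords alternate along the boundary of a convex region; but $a,b$ both lie on the single arc $\beta$ and $c,d$ both on the single arc $\alpha$, which together form $\partial K$, so their cyclic order is ``$a,b$ consecutive, $c,d$ consecutive'' --- never alternating. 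This contradiction shows the drawing is plane, hence $G_Q(S)$ is planar.

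The one nontrivial ingredient, and the step I expect to be the real obstacle, is the pseudodisk property of homothets; everything else is elementary plane geometry, and the potential degeneracies --- a point of $S$ exactly on a homothet boundary, or a boundary--boundary intersection point coinciding with some $a,b,c,d$ --- have been engineered away by the initial enlargement, which pushes every relevant point into an open region. It is essential here that the two bodies are homothets of the \emph{same} convex set: for homothets of two different bodies, or of a body and its reflection, the statement fails.
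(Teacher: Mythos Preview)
The paper does not give its own proof of this lemma: it is quoted from \cite{BCCS08c,sarioz}, and the surrounding paragraph only recounts the history (general-position assumption, the lexicographic tie-breaking rule of Bose \emph{et al.}, and Sarioz's removal of compactness). So there is no in-paper argument to compare against.

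Your proposal is the standard proof of this fact and is essentially correct. The straight-line drawing together with the pseudodisk property of positive homothets of a fixed convex body is exactly how planarity of convex-distance Delaunay graphs is usually established; your $\varepsilon$-enlargement cleanly disposes of the boundary degeneracies that the paper alludes to. Two minor remarks: first, the pseudodisk property you invoke is genuinely the whole content of the lemma, and your sketch (``balls of a convex distance function'') is not yet a proof---the clean argument is that for homothets $A,B$ of a convex body with $A\not\subseteq B$ and $B\not\subseteq A$ there is a line separating $A\setminus B$ from $B\setminus A$ (take a suitable common internal/external tangent direction of the two scaled copies), which forces each difference to be connected and hence the boundaries to cross at most twice. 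Second, the paper's discussion of the tie-breaking rule in \cite{BCCS08c} concerns extra edges added when four or more points lie on a common homothet boundary with empty interior; those edges need not have a two-point witness in the sense of the definition, so your argument covers $G_Q(S)$ as the paper defines it but not necessarily the augmented graph mentioned in the commentary.
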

Previously published versions of this result required that the points of $S$ are in general position (that is, no four points of $S$ are on the boundary of a range). The generalization to any point set was done by Bose {\em et al.}~\cite{BCCS08c}. Whenever a homothet $\quert$ contains more than $3$ points on its boundary, the edges $uv,uw$ and $vw$ are added to $G_Q(S)$, where $u,v$, and $w$ are the three lexicographically smallest points of $S\cap \quert$. With this definition, they showed that planarity holds for any compact and convex range. The compactness requirement was afterwards removed by Sarioz~\cite{sarioz}.

\begin{figure}[tb]
\center
\includegraphics[width=0.4\textwidth]{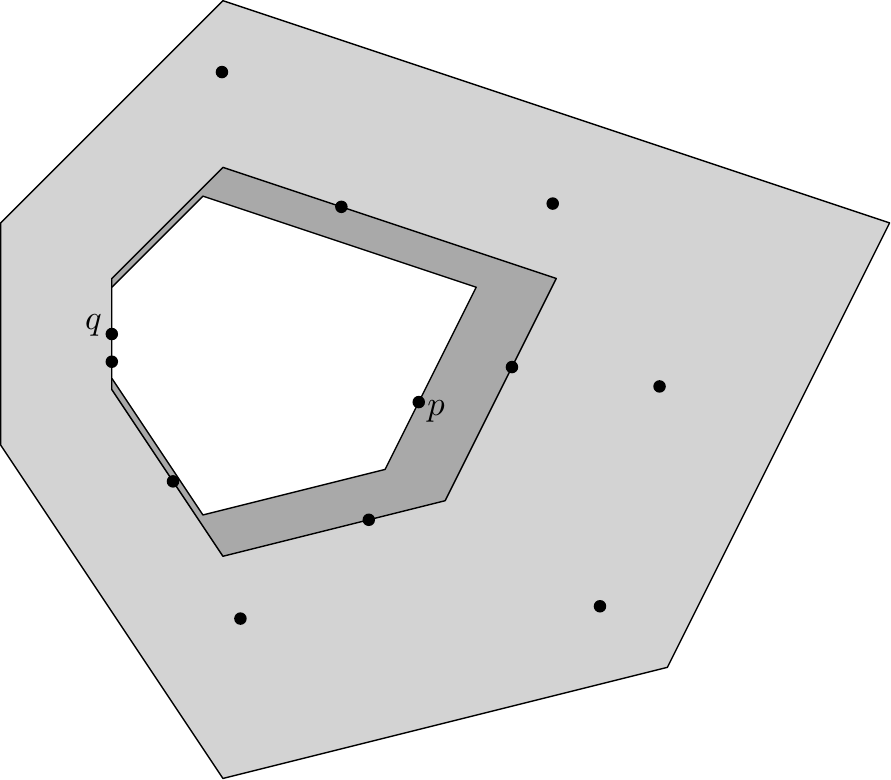}
\caption{Proof of Lemma~\ref{lem_edge}: given a homothet $\quert$ (light grey), we shrink it until further shrinking will have less than two points (dark grey). Afterwards we keep shrinking while remaining tangent to a point $q$ on the boundary until the point in the interior of the range (if any) reaches the boundary. The resulting range $\quertt$ is depicted in white. }
\label{fig_shrink}
\end{figure}

\begin{lemma}
\label{lem_edge}
For any homothet $\quert$ containing two or more points of $S$, there exist $p,q\in S \cap \quert$ such that $pq\in E$.
\end{lemma}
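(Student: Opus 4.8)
The plan is to produce the desired edge by shrinking $\quert$ in at most two phases down to a homothet $R$ of $\quer$ with $R\subseteq\quert$ whose intersection with $S$ is either exactly two points or at least three points all lying on $\partial R$. In the first case the definition of the Delaunay graph directly gives $pq\in E$ for $\{p,q\}=R\cap S$; in the second case the tie-breaking convention in the definition of $G_Q(S)$ (recalled right after Lemma~\ref{lem_planar}) provides a triangle, hence an edge, among three of these points. Either way the endpoints lie in $R\subseteq\quert$, as required. This follows the picture in Figure~\ref{fig_shrink}.

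\emph{Phase 1.} I would fix a point $c$ in the interior of $\quert$ and shrink $\quert$ homothetically toward $c$: for $\mu\in(0,1]$ set $\quert_\mu=\mu(\quert-c)+c$, so the $\quert_\mu$ are nested homothets of $\quer$ contained in $\quert$, with $\quert_1=\quert$ and $\quert_\mu\to\{c\}$ as $\mu\to 0$. Let $\mu^*$ be the infimum of the $\mu$ with $|\quert_\mu\cap S|\ge 2$; a compactness argument, using that $S$ is finite and the $\quert_\mu$ are compact and nested, shows this infimum is attained. Writing $\quertt=\quert_{\mu^*}$ and $B=\quertt\cap S$, we then have $|B|\ge 2$ while $|\quert_\mu\cap S|\le 1$ for every $\mu<\mu^*$. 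The key observation is that \emph{at most one point of $B$ can lie in the interior of $\quertt$}: a point of $S$ interior to $\quertt$ is still contained in $\quert_\mu$ for all $\mu$ slightly below $\mu^*$ (scaling it away from $c$ by a factor slightly larger than $1$ keeps it inside), so two interior points would force $|\quert_\mu\cap S|\ge 2$ for some $\mu<\mu^*$, a contradiction. Hence either $|B|=2$, and we are done with $R=\quertt$; or $|B|\ge 3$ with every point of $B$ on $\partial\quertt$, and we are done with $R=\quertt$ by the convention; or $|B|\ge 3$ with exactly one point $b$ interior to $\quertt$, which is the only remaining case.

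\emph{Phase 2.} In that case I would choose a point $q\in B$ lying on $\partial\quertt$ (there are at least two of these) and shrink $\quertt$ toward $q$: for $\nu\in(0,1]$ set $\quertt_\nu=\nu(\quertt-q)+q$, which are nested homothets of $\quer$ contained in $\quertt$ with $q\in\partial\quertt_\nu$ for every $\nu$ (a homothety centered at a boundary point keeps it on the boundary), and $\quertt_\nu\cap S\subseteq B$. Since $b$ is interior to $\quertt$ it survives for $\nu$ close to $1$, so the infimum $\nu^*$ of the $\nu$ with $|\quertt_\nu\cap S|\ge 2$ is strictly less than $1$; it is again attained, and I set $R=\quertt_{\nu^*}\subsetneq\quertt$, so $q\in R\cap S$ and $|R\cap S|\ge 2$. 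Now I claim \emph{every} point of $R\cap S$ lies on $\partial R$: $q$ does, and a point $a\in R\cap S$ interior to $R$ would, by the same argument as above (scaling $a$ slightly away from $q$), give $\{q,a\}\subseteq\quertt_\nu$ for some $\nu<\nu^*$, contradicting the choice of $\nu^*$. Thus $R$ is a homothet of $\quer$ inside $\quert$ all of whose $S$-points lie on $\partial R$, and we finish exactly as in Phase~1 according to whether $|R\cap S|$ equals $2$ or is at least $3$.

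The step I expect to require the most care is the pair of claims that a point interior to the current homothet survives the homothetic shrinking slightly longer, together with the degenerate bookkeeping around them: one must be sure that after Phase~2 \emph{all} of the points of $R\cap S$---not merely two or three of them---lie on $\partial R$, so that the cocircular-type tie-breaking convention of $G_Q(S)$ applies directly, and one must check the limit/compactness steps for ranges that need not be strictly convex and for point sets in arbitrarily degenerate position.
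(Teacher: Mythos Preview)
Your argument is correct and follows essentially the same two–phase shrinking strategy as the paper: first shrink $\quert$ about an interior point to a minimal homothet $\quertt$ with $|\quertt\cap S|\ge 2$, observe that at most one point of $\quertt\cap S$ can be interior, and if one is, shrink a second time toward a boundary point $q$ until all remaining points of $S$ lie on the boundary; then invoke the tie-breaking convention from~\cite{BCCS08c}. Your write-up is in fact slightly more careful than the paper's (you explicitly work with infima and justify their attainment, and you verify that after Phase~2 \emph{all} points lie on $\partial R$), but the underlying idea is the same.
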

\begin{proof}
Let $\quert$ be a homothet of center $c_0$ and scaling $\lambda_0$ that contains two or more points of $S$. We shrink it continuously keeping the same center; let $\lambda_{\min}$ be the smallest scaling such that $\quer(c_0,\lambda_{\min})$ has two (or more) points of $S$. If $\quer(c_0, \lambda_{\min})$ contains exactly  two points $p,q\in S$, we have $pq\in E$ by definition of $G_Q(S)$. However, we might have some kind of degeneracy in which $\quer(c_0,\lambda_{\min})$ contains three (or more) points of $S$. Observe that this can only happen if there are two or more points on the boundary and possibly an interior point. 

First consider the case in which there exists a point $p\in S$ in the interior of $\lambda_{\min}$. Pick any point $q\in S$ on the boundary of $\quer(c_0,\lambda_{\min})$ and shrink the homothet %$\quer(c_0,\lambda_{\min})$
 remaining tangent to $q$ until $p$ reaches the boundary  (see Figure \ref{fig_shrink}). After this shrinking process, both $p$ and $q$ will be on the boundary of the new homothet. Moreover, any other point that was previously in $\quer(c_0,\lambda_{\min})$ either remains on the boundary or is not in the range anymore. 
 
That is, we can always shrink a range $\quert$ to another range $\quertt \subseteq \quert$ that contains two or more points on its boundary and none in the interior. Hence, by the result of \cite{BCCS08c} we know that there will be an edge between the two lexicographically smallest points of $S \cap \quertt$. 
\qed\end{proof}

\begin{theorem}\label{theo_primal}
For any two-dimensional range $\quer$, we have $c_\quer(2)\leq4$.
\end{theorem}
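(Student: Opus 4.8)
The plan is to reduce the statement directly to the Four Color Theorem via the Delaunay graph introduced above. Fix a two-dimensional range $\quer$ and a finite point set $S\subset\Re^2$, and let $H$ be the primal hypergraph it induces. Since here $k=2$, a hyperedge $e$ of size $r$ must receive $\min\{r,2\}$ distinct colors; for $r=1$ this is automatic, so the only real requirement is that every hyperedge with $r\geq 2$ be non-monochromatic.

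First I would invoke Lemma~\ref{lem_planar}: the Delaunay graph $G_Q(S)=(S,E)$ is planar. By the Four Color Theorem it therefore admits a proper vertex coloring $\chi\colon S\to\{1,2,3,4\}$, i.e. one with $\chi(p)\neq\chi(q)$ whenever $pq\in E$. I claim this same coloring $\chi$ witnesses $c_\quer(2)\leq 4$ for $H$. Indeed, let $e$ be any hyperedge with $|e|\geq 2$; by definition $e=S\cap\quert$ for some homothet $\quert$ of $\quer$ containing at least two points of $S$. Lemma~\ref{lem_edge} then hands us two points $p,q\in S\cap\quert=e$ with $pq\in E$, whence $\chi(p)\neq\chi(q)$. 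Thus $e$ contains two vertices of distinct colors and is non-monochromatic, as needed. Since $S$ and $\quer$ were arbitrary, $c_\quer(2)\leq 4$.

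I do not expect a genuine obstacle at the level of the theorem itself: the entire burden has already been discharged by the two preceding lemmas, namely the planarity of $G_Q(S)$ for arbitrary convex ranges and arbitrary (possibly degenerate) point sets, and the fact that every homothet meeting $S$ in at least two points contains a Delaunay edge. The one point that deserves a word of care is the degenerate configuration in which a homothet carries several points of $S$ on its boundary; this is precisely handled by the modified definition of $G_Q(S)$ used in Lemma~\ref{lem_edge} (adding the triangle on the three lexicographically smallest boundary points), so that a Delaunay edge inside $\quert$ is guaranteed in every case.

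Finally, I would remark that the argument is constructive: $G_Q(S)$ can be built in polynomial time, and a proper $4$-coloring of a planar graph can be computed in polynomial time, so the coloring $\chi$ is obtained by a simple polynomial-time algorithm, matching the claim made in the introduction.
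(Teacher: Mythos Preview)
Your proposal is correct and follows exactly the paper's own argument: planarity of $G_Q(S)$ from Lemma~\ref{lem_planar}, the Four Color Theorem, and then Lemma~\ref{lem_edge} to guarantee a Delaunay edge inside every homothet meeting $S$ in at least two points. The additional remarks on degeneracies and algorithmic constructivity match the paper's treatment as well.
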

\begin{proof}
By Lemma \ref{lem_planar}, $G_Q(S)$ is planar, hence 4-colorable. By Lemma \ref{lem_edge}, any homothet $\quert$ containing two or more points of $S$ must contain $p,q\in S\cap \quert$ such that $pq\in E$. In particular, these points cannot have the same color assigned, hence $\quert$ cannot be monochromatic.
\qed\end{proof}
The proof yields an $O(n^2)$-time algorithm. The bound of Theorem~\ref{theo_primal} is tight for a wide class of ranges (see Section \ref{sec_lb}). 

\section{Coloring Dual Hypergraphs}\label{sec_dual}
 
In this section we describe a similar approach for the dual variant of the problem in the plane. Recall that in the dual problem, we are given a set $S$ of compact and convex homothets of $\quer$, and we are interested in coloring the elements of $S$ so that any point of the plane covered at least twice is covered by two homothets of different colors. For simplicity, we first suppose that no range of $S$ is contained in another; we will show how to remove this assumption afterwards. 

In order to solve this problem, we lift the two-dimensional ranges to a three dimensional space. We map the homothet of $\quert$ of center $(x,y)$ and scaling $\lambda$ to the three dimensional point  $\rho (\quert )=(x, y, \lambda)\in \Re^3$. Given a set $S$ of homothets of $\quer$, we define $\rho (S) = \{\rho (\quert) : \quert \in S\}$ as the set containing the images of the ranges in $S$. For any point $p=(x,y,d)$, we associate the three dimensional range $\pi(p)$ as the cone with apex at $(x,y,d)$ such that the intersection with the horizontal plane of height $z$ is $Q(0, z-d)$ (if $z\geq d$) or empty (if $z<d$), where $\quer^*=\quer(0,-1)$ is the reflection of $\quer$ about its center.
%define the opposite operation
%We also define the opposite operation for a point; the {\em cone} $\pi (p)$ of point $p=(x,y,d) \in \Re^3$ is defined as follows: let $\quer^*=\quer(0,-1)$ be the reflexion of $\quer$ about its center. The intersection of $\pi (p)$ with the horizontal plane of height $z \geq d$ is the homothet of $\quer^*$ with center $(x,y)$ and scaling $z-d$. The intersection of $\pi (p)$ with a horizontal plane of height $z<d$ is empty.
 Note that the cone $\pi (p)$ so defined is convex (see Figure \ref{fig_trans}). We define the {\em downward cone} $\pi^* (p)$ as the centrally symmetric image of $\pi(p)$ through $p$. By symmetry, we observe the following: 

\begin{figure*}[t]
  \centering
	\begin{tabular}{cc}

		\begin{minipage}{0.5\hsize}
%			\begin{center}
				\includegraphics[width=\textwidth]{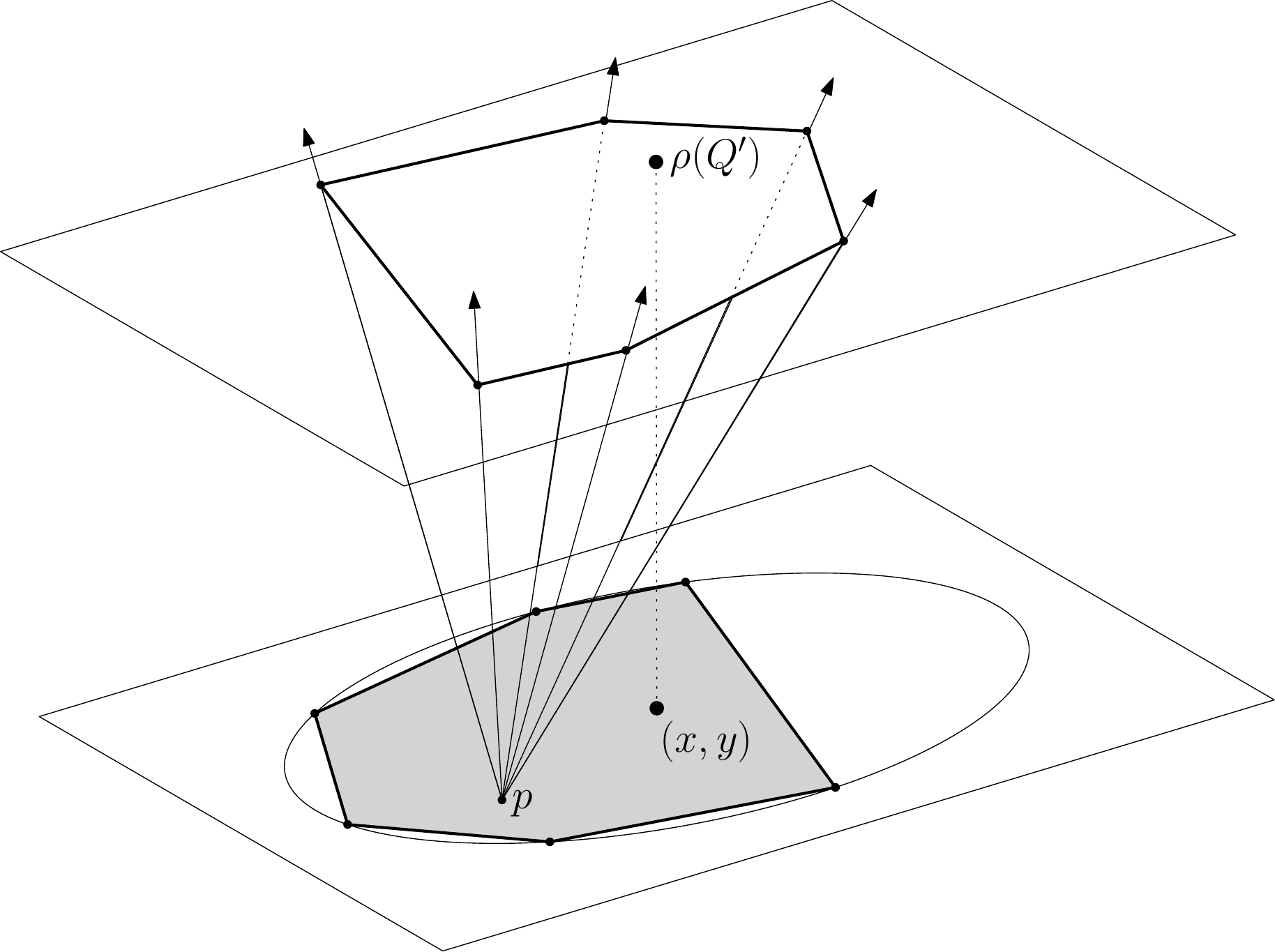}
				\caption{Mapping a homothet $\quert$ (in grey) to a point in $\rho (\quert ) \in\Re^3$, and a point $p$ to a cone $\pi (p)$.}
				\label{fig_trans} 
%			\end{center}
		\end{minipage}
		
		\begin{minipage}{0.5\hsize}
			\begin{center}
				\includegraphics[width=\textwidth]{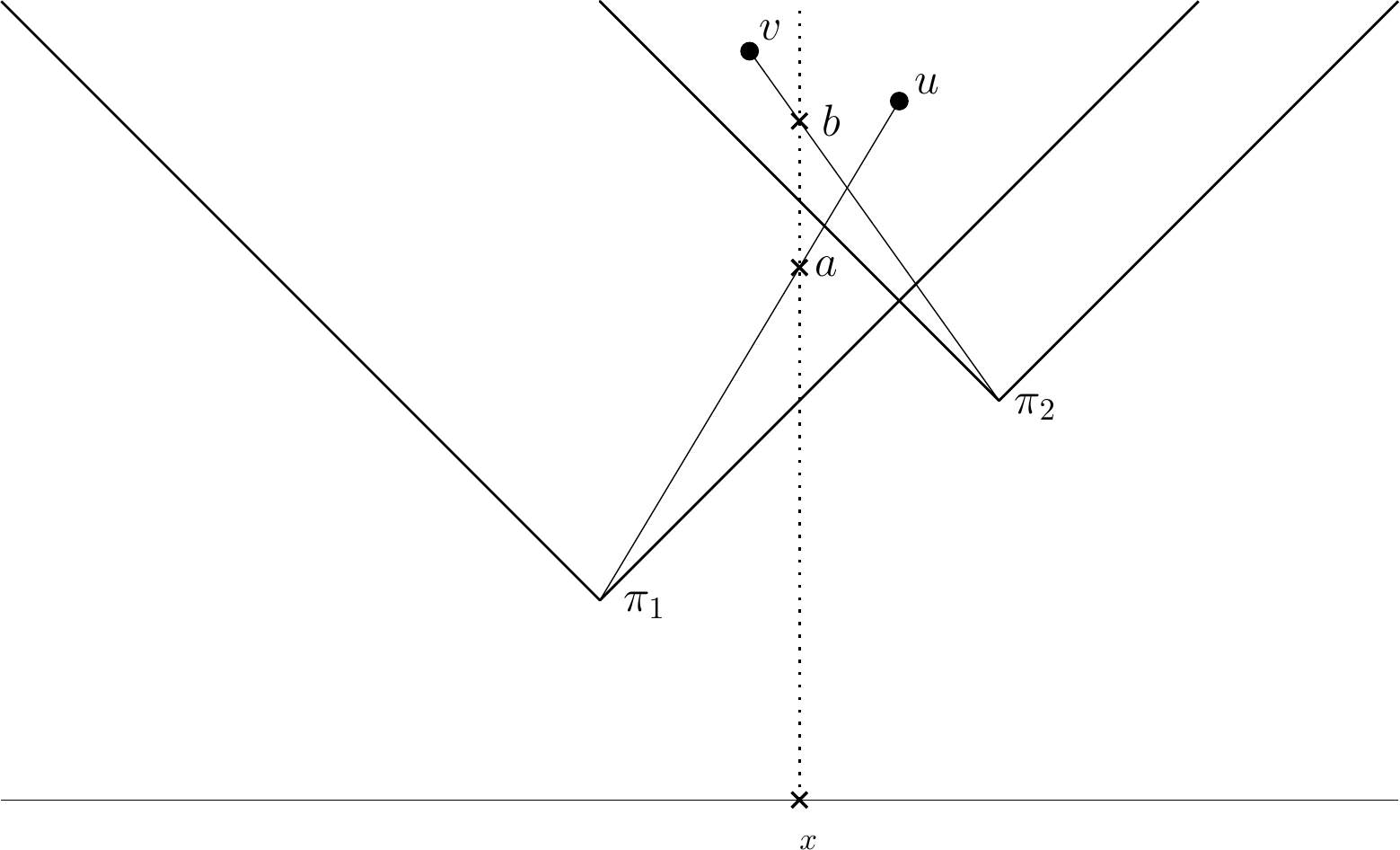}
				\caption{A crossing in the embedding implies that a cone contains three points.}
				\label{fig_planar}
			\end{center}
		\end{minipage}		

	\end{tabular}
\end{figure*}
\begin{lemma}\label{lem_dual}
For any $p,q\in\Re^3$, we have $q\in\pi (p)\Leftrightarrow p\in\pi^*(q)$. Moreover, for any point $(x, y)\in \Re^2$ and range $\quert$, $(x, y)\in \quert \Leftrightarrow \rho(\quert ) \in \pi((x, y, 0))$.
\end{lemma}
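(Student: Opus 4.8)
The plan is to unwind the definitions of $\rho$, $\pi$ and $\pi^*$ and observe that both equivalences are elementary manipulations of set membership; no convexity or topology is needed, only the linear description of the cone. The organizing observation is that, for every $p\in\Re^3$, the cone $\pi(p)$ is a translate of one fixed cone. Let $C\subset\Re^3$ be the cone with apex at the origin whose slice at height $h$ equals $\{h\,x : x\in\quer^*\}$ inside the plane $\{z=h\}$ for $h\ge 0$, and is empty for $h<0$. Straight from the definition of $\pi$ — the apex of $\pi(p)$ is $p$, and its slice at height $z$ is $\quer^*$ scaled by $z-p_3$ and recentered at the horizontal projection of $p$ — one obtains $\pi(p)=p+C$. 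Since $\pi^*(p)$ is by definition the image of $\pi(p)$ under the point reflection $x\mapsto 2p-x$, this yields $\pi^*(p)=p+(-C)$ with $-C=\{-v:v\in C\}$, and the first equivalence becomes purely formal: $q\in\pi(p)\Leftrightarrow q-p\in C\Leftrightarrow p-q\in -C\Leftrightarrow p\in q+(-C)=\pi^*(q)$. (Both sides force $p_3\le q_3$, consistent with $C$ lying weakly above the origin and $-C$ weakly below it.)

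For the second statement I would compute both sides for a homothet $\quert=\quer(t,\lambda)$ with $t=(a,b)$ and $\lambda>0$ (the scaling is positive for the homothets under consideration; for $\lambda=0$ the homothet is the single point $\{t\}$ and the claim is immediate). On one side, $(x,y)\in\quert$ means $(x,y)-t\in\lambda\quer$, where $\lambda\quer:=\{\lambda w:w\in\quer\}$. On the other side, $\rho(\quert)=(a,b,\lambda)$ lies in $\pi((x,y,0))$ precisely when $(a,b)$ belongs to the slice of that cone at height $\lambda$; since $\quer^*=-\quer$, this slice is $\{(x,y)-\lambda u:u\in\quer\}$, so the membership reads $(a,b)-(x,y)\in-\lambda\quer$, equivalently $(x,y)-t\in\lambda\quer$. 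The two conditions coincide, which is exactly the asserted equivalence, and this is the fact that later lets the covering condition on a point be transferred to the lifted picture.

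I expect no genuine obstacle; the only delicate point is the sign bookkeeping, that is, checking that the reflection $\quer^*$ built into the definition of $\pi$, rather than $\quer$ itself, is precisely what lets membership of a point in a homothet be read off as membership of the lifted homothet in the point's cone. Had $\quer$ been used in place of $\quer^*$, the two horizontal coordinates would come out mirrored and the equivalence would fail, so the plan is to carry the scaling factor and all signs explicitly through the computation so that the match is manifestly exact.
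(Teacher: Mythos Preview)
Your proposal is correct and is essentially the explicit verification the paper omits: the paper states the lemma as an immediate observation ``by symmetry'' and gives no proof, so your unpacking via $\pi(p)=p+C$, $\pi^*(q)=q-C$, and the sign check with $\quer^*=-\quer$ is exactly the intended argument written out in full.
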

%\begin{lemma}\label{lem_duallift}
%\end{lemma}

It follows that any coloring of $\rho(S)$ with respect to the conic ranges $\pi$ is a valid coloring of $S$. Let $G_\pi(\rho(S))$ be the Delaunay graph in $\Re^3$ with cones as ranges. That is, the vertex set of $G_\pi(\rho(S))$ is  $S$ and two ranges $\quert ,\quertt$ of $S$ are adjacent if and only if there exists a point $p\in\Re^3$ such that $\pi (p) \cap \rho (S) = \{ \rho (\quert ), \rho (\quertt )\}$. We claim that $G_\pi(\rho(S))$ satisfies properties similar to those of Lemmas~\ref{lem_planar} and \ref{lem_edge}. In order to prove so, we first introduce some inclusion properties.

\begin{lemma}\label{lem_inclus}
For any $p\in\Re^3$, $q\in \pi (p)$ and $m$ on the line segment $pq$, we have $\pi (q) \subseteq \pi (p)$
and $q \in \pi (m)$.
\end{lemma}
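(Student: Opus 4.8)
The plan is to exploit the explicit conic structure of $\pi(p)$. Recall that $\pi(p)$ for $p=(x,y,d)$ is the cone whose slice at height $z\geq d$ is the homothet $Q^*(0,z-d)$, i.e. the translate of $Q^*$ scaled by $z-d$; for $z<d$ the slice is empty. The first step is to record two scaling/translation facts about these slices. If $p=(x_p,y_p,d_p)$ and $q=(x_q,y_q,d_q)$ with $q\in\pi(p)$, then in particular $d_q\geq d_p$, and the projection $(x_q,y_q)$ lies in the slice of $\pi(p)$ at height $d_q$, which is the set $(x_p,y_p)+(d_q-d_p)\,Q^*$. I would write this inclusion as $(x_q-x_p,y_q-y_p)\in(d_q-d_p)Q^*$.

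Next I would handle the claim $\pi(q)\subseteq\pi(p)$ by comparing slices at a common height $z$. For $z<d_q$ the slice of $\pi(q)$ is empty, so there is nothing to check; assume $z\geq d_q$ (hence $z\geq d_p$ as well). The slice of $\pi(q)$ at height $z$ is $(x_q,y_q)+(z-d_q)Q^*$ and the slice of $\pi(p)$ is $(x_p,y_p)+(z-d_p)Q^*$. Writing $z-d_p=(d_q-d_p)+(z-d_q)$ and using that $Q^*$ is convex with $0\in Q^*$, the set $(d_q-d_p)Q^*+(z-d_q)Q^*$ equals $(z-d_p)Q^*$ (this is the standard fact $\lambda K+\mu K=(\lambda+\mu)K$ for convex $K$ and $\lambda,\mu\geq0$). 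Combining with $(x_q-x_p,y_q-y_p)\in(d_q-d_p)Q^*$ gives
\[
(x_q,y_q)+(z-d_q)Q^*\ \subseteq\ (x_p,y_p)+(d_q-d_p)Q^*+(z-d_q)Q^*\ =\ (x_p,y_p)+(z-d_p)Q^*,
\]
which is exactly the desired slicewise inclusion, and hence $\pi(q)\subseteq\pi(p)$.

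For the second assertion, let $m=(1-t)p+tq$ for some $t\in[0,1]$ lie on segment $pq$; I must show $q\in\pi(m)$. Its height is $d_m=(1-t)d_p+td_q$, and since $d_q\geq d_p$ we get $d_q\geq d_m$, so the relevant slice of $\pi(m)$ at height $d_q$ is nonempty and equals $(x_m,y_m)+(d_q-d_m)Q^*$. Now $(x_m,y_m)=(1-t)(x_p,y_p)+t(x_q,y_q)$, so $(x_q,y_q)-(x_m,y_m)=(1-t)\big((x_q,y_q)-(x_p,y_p)\big)$, which lies in $(1-t)(d_q-d_p)Q^*$ by the first step; and $(1-t)(d_q-d_p)=d_q-d_m$, so $(x_q,y_q)\in(x_m,y_m)+(d_q-d_m)Q^*$, i.e. $q\in\pi(m)$. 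The only mild subtlety — and the one place to be careful — is the convex-body identity $\lambda K+\mu K=(\lambda+\mu)K$ for $K=Q^*$; this is where convexity of $Q$ (hence of $Q^*$) is essential, so I would state it as a one-line lemma or cite it, since everything else is a direct computation. The degenerate cases $d_p=d_q$ (then $\pi(q)\subseteq\pi(p)$ forces $(x_p,y_p)=(x_q,y_q)$, so $p=q$) and $t\in\{0,1\}$ are immediate.
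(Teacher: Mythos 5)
Your proof is correct, and it takes a genuinely different route from the paper's. You argue slice by slice: writing the horizontal section of $\pi(p)$ at height $z\geq d_p$ as $(x_p,y_p)+(z-d_p)Q^*$, you encode the hypothesis $q\in\pi(p)$ as $(x_q-x_p,y_q-y_p)\in(d_q-d_p)Q^*$ and reduce both assertions to the Minkowski-sum identity $\lambda Q^*+\mu Q^*=(\lambda+\mu)Q^*$ for $\lambda,\mu\geq 0$, which is exactly where convexity enters (incidentally, $0\in Q^*$ is not needed for that identity, only convexity of $Q^*$). The paper proceeds otherwise: for $\pi(q)\subseteq\pi(p)$ it argues by contradiction, projecting both cones onto a vertical plane chosen orthogonal to a perturbed common tangent of the two slices, and using that these projections are planar cones with parallel bounding halflines, so a failed inclusion would force the apex of $\pi(q)$ outside $\pi(p)$; and for $q\in\pi(m)$ it does no computation at all, but invokes the duality of Lemma~\ref{lem_dual} twice together with convexity of the downward cone $\pi^*(q)$ (from $p\in\pi^*(q)$ and $q\in\pi^*(q)$, convexity gives $m\in\pi^*(q)$, hence $q\in\pi(m)$). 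Your version is more elementary and self-contained, treats both claims uniformly in explicit coordinates, and isolates the convexity step as a one-line standard lemma; the paper's version avoids coordinates, keeps the geometric picture of nested cones, and reuses the $\pi/\pi^*$ duality that is needed elsewhere in the section, which makes the second claim essentially immediate. Your handling of the degenerate cases ($d_p=d_q$ and $t\in\{0,1\}$) is fine, so there is no gap.
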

\begin{proof}
%Without loss of generality we assume that $p$ is the origin. 
Observe that the projections of the cones $\pi (p)$ and $\pi (q)$ on any vertical plane (i.e., any plane of equation $ax+by+c=0$) are two-dimensional cones; that is, the set of points above two halflines with a common origin. Moreover, the slope of the halflines only depends on $a,b$ and $\quer$. Let $z_q$ be the $z$-coordinate of $q$, and consider the intersections of the cones $\pi (p)$ and $\pi (q)$ with a horizontal plane $\Pi$ of $z$-coordinate $t \geq z_q$. We get two homothets of $\quer^*$, say $\quer^*_p$ and $\quer^*_q$. We have to show that $\quer^*_q \subseteq \quer^*_p$ for any $t$.

Suppose otherwise. There exists a vertical plane $\Pi'$ for which the projection of $\quer^*_q$ on $\Pi'$ is not included in the projection of $\quer^*_p$. To see this, we can find a common tangent to $\quer^*_p$ and $\quer^*_q$ in $\Pi$, slightly rotate it so that it is tangent to $\quer^*_q$ only, then pick a plane that is orthogonal to that line. But the projections of $\pi (p)$ and $\pi (q)$ on $\Pi'$ are two cones with parallel bounding halflines, thus the projection of the apex of $\pi (q)$ cannot be in that of $\pi (p)$, a contradiction.

The proof of the second claim is similar. We know that $q\in \pi (p)$, hence from Lemma~\ref{lem_dual}, $p\in\pi^* (q)$. Now from the convexity of $\pi^* (q)$, we have that $m\in \pi^*(q)$.  Using again Lemma~\ref{lem_dual}, we obtain $q\in\pi (m)$.
\qed\end{proof}

\begin{lemma}\label{lem_dualplanar}
The graph $G_\pi(\rho(S))$ is planar. 
\end{lemma}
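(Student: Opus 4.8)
The plan is to embed $G_\pi(\rho(S))$ in the plane by projecting each vertex $\rho(\quert) = (x,y,\lambda)$ vertically down to the point $(x,y)$, drawing each edge as a straight segment between the projected endpoints, and then showing this drawing has no crossings. The intuition (illustrated in Figure~\ref{fig_planar}) is that a crossing would force some cone $\pi(p)$ to contain three points of $\rho(S)$, contradicting the Delaunay defining property. First I would set up the projection map and observe that, since we have assumed no range of $S$ is contained in another, all the points $\rho(\quert)$ project to distinct planar points, so the embedding is well-defined on vertices.

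Next I would analyze a hypothetical crossing: suppose edges $\rho(\quert_1)\rho(\quert_2)$ and $\rho(\quert_3)\rho(\quert_4)$ cross in the planar drawing, witnessed in $\Re^3$ by points $p_{12}$ and $p_{34}$ with $\pi(p_{12})\cap\rho(S)=\{\rho(\quert_1),\rho(\quert_2)\}$ and $\pi(p_{34})\cap\rho(S)=\{\rho(\quert_3),\rho(\quert_4)\}$. By Lemma~\ref{lem_inclus} I may slide $p_{12}$ along the segments toward its two defining points and $p_{34}$ likewise; more usefully, using Lemma~\ref{lem_dual} and Lemma~\ref{lem_inclus} I can take $p_{12}$ to be the \emph{lowest} point whose cone contains both $\rho(\quert_1)$ and $\rho(\quert_2)$ — a point lying on the segment $\rho(\quert_1)\rho(\quert_2)$ itself — and similarly put $p_{34}$ on the segment $\rho(\quert_3)\rho(\quert_4)$. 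Since the two segments cross when projected, the vertical lines through $p_{12}$ and through $p_{34}$ are ``interleaved'' with the four vertices in the appropriate cyclic sense. The key geometric fact is monotonicity of the cones in the vertical direction (Lemma~\ref{lem_inclus}: $q\in\pi(p)$, $m\in pq \Rightarrow \pi(q)\subseteq\pi(p)$ and $q\in\pi(m)$): whichever of $p_{12},p_{34}$ is lower, say $p_{12}$, its cone $\pi(p_{12})$, being ``wide'' and opening upward in a translate-monotone way, must then contain at least one of $\rho(\quert_3),\rho(\quert_4)$ in addition to $\rho(\quert_1),\rho(\quert_2)$, because the crossing forces one of those two points to lie horizontally between the other pair at a height above $p_{12}$. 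That gives three points of $\rho(S)$ in a single cone, contradicting the Delaunay property of $p_{12}$.

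I would then handle the degenerate cases in the same spirit as Lemma~\ref{lem_edge}: if a cone witnessing an edge happens to contain more than two points of $\rho(S)$ on its boundary cone, fix the edge convention (e.g.\ take the two lexicographically smallest witnesses), and check that the crossing argument still produces a cone strictly containing a third point, which is what yields the contradiction. Finally, I would note that edges of $G_\pi(\rho(S))$ incident to a common vertex cannot overlap in the drawing (again by the inclusion lemma, two distinct Delaunay neighbors of $\rho(\quert)$ project to distinct directions from $(x,y)$), so the straight-line drawing is a genuine planar embedding.

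The main obstacle I expect is making the ``the crossing forces a third point into the lower cone'' step fully rigorous for an arbitrary convex body $\quer$: one must argue carefully, using only the vertical monotonicity and convexity of the cones (Lemmas~\ref{lem_dual} and~\ref{lem_inclus}) rather than any roundness of $\quer$, that two upward cones whose apices project to points interleaved with the four vertices, and which each contain their ``own'' pair at heights above both apices, cannot avoid swallowing a point of the other pair. Getting the interleaving combinatorics and the choice of which apex is lower exactly right — and dealing cleanly with ties in height and with boundary incidences — is the delicate part; the rest is bookkeeping on top of the already-established inclusion lemmas.
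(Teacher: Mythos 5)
There is a genuine gap, and it sits exactly where you predicted the difficulty would be. Your drawing uses straight segments between the projected vertices, and your pivotal claim is that the lowest apex whose cone contains both $\rho(\quer_1)$ and $\rho(\quer_2)$ can be taken to lie \emph{on} the segment $\rho(\quer_1)\rho(\quer_2)$. That is false: a point $m$ in the relative interior of that segment satisfies $\rho(\quer_1)\in\pi(m)$ only if $m\in\pi^*(\rho(\quer_1))$, which (since the cone is closed under scaling directions from its apex) forces $\rho(\quer_2)\in\pi^*(\rho(\quer_1))$, i.e.\ essentially that one range contains the other --- precisely the situation excluded by assumption. The true ``lowest common witness'' is the apex of the minimal cone containing both lifted points and lies strictly \emph{below} the chord (already for disks: two lifted points at equal height have their minimal witness below the midpoint). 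This also breaks the crossing argument itself: lifting a crossing point of two straight chords to points $a\in\rho(\quer_1)\rho(\quer_2)$ and $b\in\rho(\quer_3)\rho(\quer_4)$ with $a$ lower, convexity does give $a\in\pi_1$ and monotonicity gives $\pi(b)\subseteq\pi(a)\subseteq\pi_1$, but you then need a third point of $\rho(S)$ inside $\pi_1$, and neither $\rho(\quer_3)$ nor $\rho(\quer_4)$ need lie in $\pi(b)$ --- Lemma~\ref{lem_inclus} says nothing about interior points of the chord between two lifted vertices, for the same containment reason as above. So the ``interleaving forces the lower cone to swallow a point of the other pair'' step has no support, and it is not even clear that the straight-line projection is crossing-free without a different argument.

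The paper's proof sidesteps this by choosing a different drawing: each edge $\quert\quertt$ is drawn as the projection of the \emph{bent} path consisting of the two segments from $\rho(\quert)$ and $\rho(\quertt)$ to the witness apex $p$ (crossings between edges sharing an endpoint are removed by rerouting). With this drawing, any crossing point lifts to a point $a$ on a segment from a vertex $u$ to the apex of its witness cone $\pi_1$ (hence $a\in\pi_1$ by convexity) and a point $b$, vertically above $a$, on a segment from a vertex $v$ to the apex of $\pi_2$; now Lemma~\ref{lem_inclus} \emph{does} apply to $b$, giving $v\in\pi(b)\subseteq\pi(a)\subseteq\pi_1$, contradicting that $\pi_1$ contains only its own two witnesses. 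If you want to keep your plan, you must either switch to this apex-bent drawing or supply a genuinely new argument for why the straight-line drawing on the centers is non-crossing; the inclusion lemmas alone do not deliver it.
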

\begin{proof}
By definition of $E(S)$, we know that for every edge $\quert\quertt\in E$ there exists $p\in\Re^3$ such that $\pi (p) \cap \rho (S) = \{ \rho (\quert ), \rho (\quertt )\}$. We draw the edge $\quert\quertt$ as the projection (on the horizontal plane $z=0$) of the two line segments connecting respectively $\rho (\quert )$ and $\rho (\quertt )$ to $p$.

First note that crossings involving two edges with a common endpoint can be eliminated by rerouting the two polygonal lines at their intersection point. Thus it suffices to show that this embedding has no crossing involving vertex-disjoint edges. Consider two edges $uu'$ and $vv'$, and their corresponding witness cones $\pi_1\ni u, u'$ and $\pi_2\ni v, v'$. By definition of witness, each cone must contain exactly  two points. In particular, we have $u\not\in\pi_2$ and $v\not\in\pi_1$.

Suppose that the projections of the segments connecting $u$ with the apex of $\pi_1$ and $v$ with the apex of $\pi_2$ cross at a  point $x$ (other than the endpoints). Consider the vertical line $\ell$ that passes through $x$: by construction, this line must intersect with both segments at points $a$ and $b$, respectively. Without loss of generality we assume that $a$ has lower $z$ coordinate than $b$ (see Figure~\ref{fig_planar}). 

From the convexity of $\pi_1$, we have $a\in \pi_1$. From Lemma~\ref{lem_inclus}, we have $v\in \pi (b)$, $b\in \pi (a)$, and $\pi (a)\subseteq \pi (b)$. In particular, we have $v\in \pi (b) \subseteq \pi (a) \subseteq\pi_1$, which contradicts $v\not\in \pi_1$.
\qed\end{proof}

\subsection{Alternative construction via weighted Voronoi diagrams}\label{sec:wVD}

We introduce an alternative definition of $G_\pi(\rho(S))$ so as to prove its planarity. For any point $p$, we define its distance to $q$ as $d(p,q)=\min\{\lambda \geq 0 | q\in \quer(p,\lambda)\}$. That is, the smallest possible scaling so that a range of center $p$ contains $q$. This distance is called the {\em convex distance function} from $p$ to $q$ (with respect to $\quer$). %Note that, although this function is not symmetric, it satisfies the triangular inequality~\cite{DBLP:conf/compgeom/ChewD85}. 

Given a set $S=\{\quer_1, \ldots, \quer_n\}$ of homothets of $\quer$, we construct an additively weighted Voronoi diagram $V_\quer(S)$ with respect to the convex distance function~\cite{fortune}. Let $c_i$ and $\lambda_i$ be the center and scaling of $\quer_i$. Then $V_\quer(S)$ has $\{c_1,\ldots c_n\}$ as the set of sites, and each site $c_i$ is given the weight $-\lambda_i$. The additively weighted Voronoi diagram for this set of sites has a cell for each site $c_i$, defined as the locus of the points $p$ of the plane whose weighted distance $d(p,c_i)-\lambda_i$ to $c_i$ is the smallest among all sites. The dual graph for this Voronoi diagram has an edge between any two sites whose cells share a boundary. 

In the following we show that the dual of $V_\quer(S)$ is $G_\pi(\rho(S))$. Let $p = (x,y)\in\Re^2$ be any point covered by one or more ranges of $S$. We denote by $(p,z)$ the point of $\Re^3$ of coordinates $(x,y,z)$, for any $z\in\Re$. From Lemma~\ref{lem_dual}, the points of $\rho (S)$ contained in $\pi(p,0)$ are exactly the ranges of $S$ that contain $p$. We translate the cone $\pi (p,0)$ vertically upward; in this lifting process, the points of $\rho(S)\cap \pi(p,0)$ will leave the cone. For any  homothet $\quert\in S$ such that $p\in \quert$, we define $z_{\quert}(p)$ as the height in which $\rho(\quert)$ is on the boundary of the cone $\pi(p,z_{\quert}(p))$. 

\begin{figure*}[t]
  \centering
	\begin{tabular}{cc}

		\begin{minipage}{0.5\hsize}
%			\begin{center}
				\includegraphics[width=\textwidth]{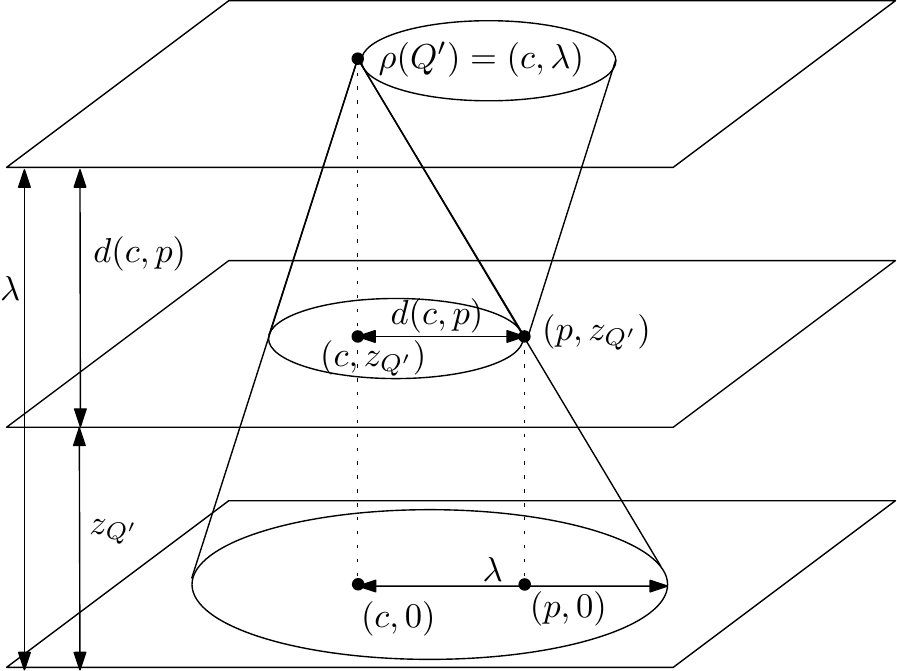}
				\caption{Proof of Lemma \ref{lem_height}. %: the difference in $z$ coordinates between points $(c,\lambda)$ and $(c,z_{\quert}(p))$ is equal to $d(c,p)$.
				}
				\label{fig_vorono} 
%			\end{center}
		\end{minipage}
		
		\begin{minipage}{0.5\hsize}
			\begin{center}
				\includegraphics[width=\textwidth]{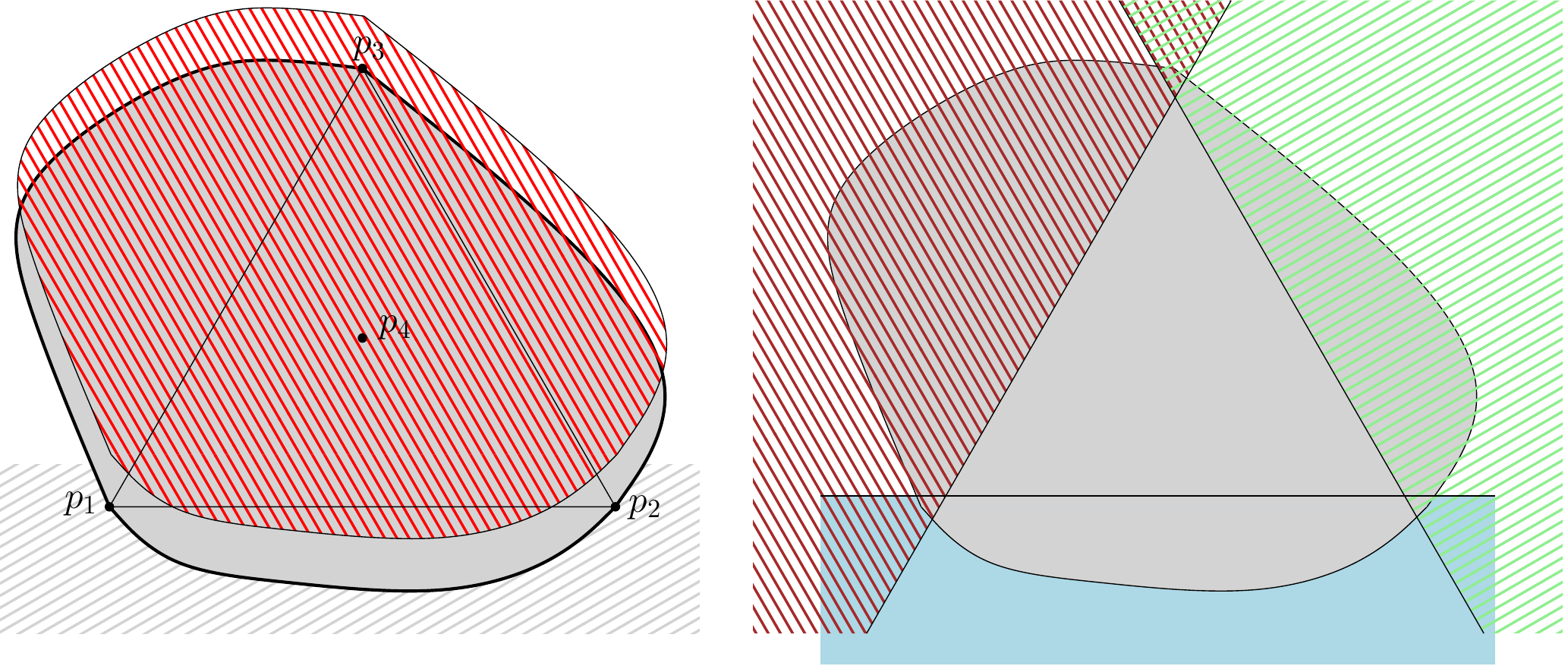}
				\caption{(left): Any two points of the set can be covered by either translating or scaling the range $\quer$, hence $c_\quer(2)\geq 4$. (right): Analogous construction for the dual problem.}
				\label{fig_ranges}
			\end{center}
		\end{minipage}		

	\end{tabular}
\end{figure*}

\begin{lemma}\label{lem_height}
For any homothet $\quert$ of center $c$ and scaling $\lambda$ and $p\in\quert$, $z_{\quert}(p)=\lambda - d(c,p)$.
\end{lemma}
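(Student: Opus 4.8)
The plan is to unwind the definitions of $\pi$, $\rho$, and the convex distance function $d$, and verify that the height $z_\quert(p)$ at which the lifted point $\rho(\quert)$ sits on the boundary of the translated cone $\pi(p, z_\quert(p))$ satisfies the claimed identity. Recall that $\rho(\quert) = (c_x, c_y, \lambda)$ where $c = (c_x, c_y)$ is the center of $\quert$ and $\lambda$ its scaling. By definition, the cone $\pi(p, h)$ with apex $(p, h)$ meets the horizontal plane of height $z$ in the homothet $\quer^*(p, z-h) = \quer(p, -(z-h))$ when $z \geq h$. So I would first ask: at which height $z$ does the horizontal slice of $\pi(p,h)$ first contain the point $c$? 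That happens exactly when $c$ lies on the boundary of $\quer^*(p, z - h)$, i.e. when the scaling $z - h$ of that reflected copy of $\quer$ centered at $p$ is the smallest scaling for which $c$ is covered.

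The key observation is that the convex distance function must be taken with respect to the right body. Since the slices of $\pi(p,\cdot)$ are homothets of $\quer^* = \quer(0,-1)$, the relevant distance from $p$ to $c$ is $d^*(p,c) = \min\{\mu \geq 0 : c \in \quer^*(p,\mu)\}$. But $c \in \quer^*(p,\mu)$ iff $c - p \in \mu\,\quer^* = -\mu\,\quer$ iff $p - c \in \mu\,\quer$ iff $p \in \quer(c,\mu)$, which is precisely the condition defining $d(c,p)$. Hence $d^*(p,c) = d(c,p)$, and the smallest $z-h$ for which $c$ enters the slice equals $d(c,p)$. So the apex of the translated cone sits at height $h$ satisfying: the point $\rho(\quert)$, which has $z$-coordinate $\lambda$, lies on the boundary of $\pi(p,h)$ exactly when $\lambda - h = d(c,p)$, giving $h = \lambda - d(c,p)$. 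Setting $z_\quert(p) = h$ completes the argument.

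Concretely I would carry out the steps as: (1) write down explicitly that $\rho(\quert) \in \partial\pi(p,h)$ iff $c$ is on the boundary of the horizontal slice $\quer^*(p,\lambda - h)$ (using $\lambda \geq h$, which will hold since $p \in \quert$ forces $d(c,p) \leq \lambda$); (2) rewrite "$c$ on the boundary of $\quer^*(p,\mu)$" as "$\mu = d^*(p,c)$" by convexity and the definition of the convex distance function to $\quer^*$; (3) prove the identity $d^*(p,c) = d(c,p)$ by the chain of membership equivalences $c \in \quer^*(p,\mu) \Leftrightarrow p - c \in \mu \quer \Leftrightarrow p \in \quer(c,\mu)$, using $\quer^* = \quer(0,-1)$; (4) combine to get $\lambda - z_\quert(p) = d(c,p)$. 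I should also note the edge case where $p$ lies in the interior of $\quert$ so that $d(c,p) < \lambda$ and the apex is strictly above the plane $z = 0$, versus $p$ on $\partial\quert$ where $z_\quert(p) = 0$; and the degenerate possibility that $c = p$, giving $z_\quert(p) = \lambda$, which is consistent.

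The main obstacle I expect is purely bookkeeping with the two reflections: the cone $\pi$ is built from $\quer^*$ while the convex distance $d$ in the statement is defined via $\quer$ itself, so one must be careful not to conflate $d(p,c)$ with $d(c,p)$ (the convex distance function is generally asymmetric) nor $\quer$ with $\quer^*$. The clean way to manage this is to push everything through the single equivalence "$q \in \quer(p,\lambda) \Leftrightarrow q - p \in \lambda\,\quer$" and its variant for $\quer^*$, which is exactly the flavor of reasoning already used to prove Lemma~\ref{lem_dual}; no genuine geometric difficulty arises beyond that. A minor secondary point is justifying that "on the boundary of a homothet of a compact convex body" corresponds to equality in the $\min$ defining the convex distance, which follows from compactness and convexity of $\quer$ (the scaled copies are nested and their union as $\mu \to \infty$ is all of $\Re^2$ since $\quer$ contains the origin in its interior—or, if the origin is only on the boundary, one argues along the ray through $p - c$).
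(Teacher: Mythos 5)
Your proof is correct and follows essentially the same route as the paper's: both reduce "$\rho(\quert)$ lies on the boundary of the lifted cone" to the convex distance $d(c,p)$ by exploiting the central symmetry between $\quer$ and $\quer^*=\quer(0,-1)$. The only cosmetic difference is that the paper passes to the downward cone $\pi^*(\rho(\quert))$ via Lemma~\ref{lem_dual} and finishes with a similar-triangles argument in a vertical plane, whereas you compute the slice scaling of the upward cone directly and use the identity $d^*(p,c)=d(c,p)$; the two amount to the same bookkeeping.
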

\begin{proof}
Consider the cone $\pi(p,z_{\quert}(p))$ and the halfplane $z=z_{\quert}(p)$.  By Lemma \ref{lem_dual}, the fact that $\rho(\quert)$ is on the boundary of $\pi(p,z_{\quert}(p))$ is equivalent to $(p,z_{\quert}(p))$ being on the boundary of the downwards cone $\pi^*(\rho(\quert))=(c,\lambda)$. Observe that the distance between points $(c,z_{\quert}(p))$ and $(p,z_{\quert}(p))$ is exactly $d(c,p)$. Consider the plane that passes through the points $(c,0)$, $(p,0)$, and $(c,\lambda)$: we have two similar triangles whose bases have lengths $\lambda$ and $d(c,p)$, respectively. Since the height of the large triangle is $\lambda$, we conclude that the height of the smaller one must be $d(c,p)$ (see Figure \ref{fig_vorono}). That is, the difference in the $z$ coordinates between the points $(c,z_{\quert}(p))$ and $(c, \lambda)$ is $d(c,p)$. Since, by definition of $\rho$, the difference in $z$ coordinates between $(c,0)$ and $\rho(\quert)$ is exactly $\lambda$, we obtain the equality $\lambda=z_{\quert}(p)+d(c,p)$ and the Lemma follows.
\qed\end{proof}

This shows the duality between the weighted Voronoi diagram and the graph $G_\pi(\rho(S))$: let $p\in \Re^2$ be any point in the plane covered by at least one range of $\quert$. Consider the cone $\pi(p,0)$ and lift it continuously upward. The last point of $\rho(S)\cap \pi(p,0)$ to leave the cone will be one with highest $z_{\quert}(p)$. By Lemma \ref{lem_height}, it will be the homothet $\quert$ of center $c'$ and scaling $\lambda'$ that has the {\em smallest} $d(c',p)-\lambda'$. Observe that this is exactly how we defined the weights of the sites, hence $\quert$ being the last range in the cone is equivalent to $c'$ being the {\em closest} site of $p$ in $V_\quert(S)$. This can be interpreted as shrinking simultaneously all ranges until $p$ is only covered by its closest homothet $\quert$. This shrinking process is simulated in our construction through the $z$ coordinate. 

\begin{lemma}\label{lem_dualVorDel}
The dual graph of $V_\quert(S)$ is $G_\pi(\rho(S))$.
\end{lemma}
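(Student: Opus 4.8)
The plan is to prove the two graphs coincide by showing they have the same edge set (both have vertex set $S$). Everything runs through Lemma~\ref{lem_height}. For a homothet $\quert\in S$ with centre $c$ and scaling $\lambda$, write $f_{\quert}(p):=d(c,p)-\lambda$ for the weighted distance of $p$ to the site $c$, so that Lemma~\ref{lem_height} says $z_{\quert}(p)=-f_{\quert}(p)$, and Lemma~\ref{lem_dual} gives $\rho(\quert)\in\pi((p,t))$ if and only if $t\le -f_{\quert}(p)$. Consequently the cell of $\quert$ in $V_{\quer}(S)$ is exactly $\{p:f_{\quert}(p)\le f_{\quertt}(p)\text{ for all }\quertt\in S\}$, and a cone $\pi((p,t))$ meets $\rho(S)$ in precisely $\{\rho(\quert),\rho(\quertt)\}$ for some $t$ if and only if $\max\{f_{\quert}(p),f_{\quertt}(p)\}<\min_{\quer_k\neq\quert,\quertt}f_{\quer_k}(p)$. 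Hence $\{\quert,\quertt\}$ is an edge of $G_\pi(\rho(S))$ iff some point $p$ satisfies this strict separation, while $\{\quert,\quertt\}$ is an edge of the dual of $V_{\quer}(S)$ iff some point $p$ satisfies $f_{\quert}(p)=f_{\quertt}(p)<f_{\quer_k}(p)$ for every other $k$. It remains to show these two conditions are equivalent.

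One direction is immediate: from a point $p$ with $f_{\quert}(p)=f_{\quertt}(p)<f_{\quer_k}(p)$ for all other $k$, the cone with apex $(p,-f_{\quert}(p)-\varepsilon)$ contains $\rho(\quert)$ and $\rho(\quertt)$ and, for any $\varepsilon>0$ smaller than the gap to the next-nearest site, nothing else, so $\{\quert,\quertt\}\in E(G_\pi(\rho(S)))$. For the converse, suppose $\pi((p_0,t))\cap\rho(S)=\{\rho(\quert),\rho(\quertt)\}$ and, without loss of generality, $f_{\quert}(p_0)\le f_{\quertt}(p_0)$, so $\quert$ is a nearest site at $p_0$ while every site other than $\quert,\quertt$ is strictly farther; if equality holds we are already done, so assume $f_{\quert}(p_0)<f_{\quertt}(p_0)$. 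Walk along the segment $p(s)=(1-s)p_0+s\,c$, where $c$ is the centre of $\quertt$. By positive homogeneity of the convex distance function, $f_{\quertt}(p(s))=(1-s)\,d(c,p_0)-\lambda$ is linear and non-increasing in $s$, attaining its global minimum $-\lambda$ at $s=1$, whereas each $f_{\quer_k}(p(s))$ is a convex function of $s$ (a gauge composed with an affine map). Thus $f_{\quertt}(p(s))-f_{\quert}(p(s))$ is concave, positive at $s=0$, and at $s=1$ equal to $\lambda'-\lambda-d(c',c)$ (with $c',\lambda'$ the centre and scaling of $\quert$), which is \emph{negative} since the no-containment hypothesis $\quertt\not\subseteq\quert$ forces $d(c',c)>\lambda'-\lambda$. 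Hence this difference has a (unique) zero $s^*\in(0,1)$, giving a point $p^*:=p(s^*)$ on the bisector of $\quert$ and $\quertt$.

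The genuinely delicate point, which I expect to be the main obstacle, is to certify that at $p^*$ the ranges $\quert$ and $\quertt$ are still the two nearest sites, i.e.\ $f_{\quert}(p^*)=f_{\quertt}(p^*)<f_{\quer_k}(p^*)$ for all other $k$ (a third function $f_{\quer_k}$ could a priori dip below $f_{\quertt}$ somewhere along the segment, and note that the difference $f_{\quer_k}(p(s))-f_{\quertt}(p(s))$ is merely convex, so positivity at both endpoints does not rule this out). I would handle this by taking $p^*$ to be the \emph{first} point of the segment at which $\quert$ ceases to be the strict unique nearest site: there some other $f_{\quer_m}$ meets $f_{\quert}$, and one argues --- again via no-containment, which fixes the sign of $f_{\quer_m}(c)-f_{\quertt}(c)$ --- that the only range that can overtake $\quertt$ and be the one catching $\quert$ is $\quertt$ itself; if a clean one-shot argument resists, one iterates the walk (each step strictly lowering the value of the currently dominating site), terminating after finitely many ranges with a point on the $\quert$--$\quertt$ bisector lying in the region where they are the two nearest, or invokes the standard fact that a nonempty order-$2$ Voronoi cell of $\{\quert,\quertt\}$ forces $\quert,\quertt$ to be Delaunay-adjacent. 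In every case the no-containment assumption of Section~\ref{sec_dual} is what makes the argument close (it also guarantees every Voronoi cell is nonempty, via the same estimate), and once such a $p^*$ is found the dual edge $\{\quert,\quertt\}$ of $V_{\quer}(S)$ is established, proving the equality.
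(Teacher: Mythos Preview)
Your first direction (Voronoi edge $\Rightarrow$ Delaunay edge) is correct and is essentially the paper's argument, modulo the cosmetic use of an $\varepsilon$-lowered apex instead of placing both lifted points exactly on the cone boundary.

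The second direction is where you diverge, and there is a genuine gap. You walk in the plane from $p_0$ toward the centre $c$ of $\quertt$ and try to locate the first point where $f_{\quert}$ is tied with some other $f_{\quer_m}$, hoping $m=\quertt$. Your own analysis shows why this need not happen: $f_{\quer_k}(p(s))-f_{\quertt}(p(s))$ is convex and positive at both endpoints $s=0,1$ (the no-containment hypothesis indeed gives $f_{\quer_k}(c)>f_{\quertt}(c)$ via the triangle inequality for the gauge), but a convex function positive at two points can still be negative in between. Knowing the sign at $s=1$ therefore does not control which site first ties with $\quert$, and your proposed iteration is not fleshed out. The final appeal to ``a nonempty order-$2$ cell forces adjacency'' is precisely the statement to be proved in this weighted convex-distance setting.

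The paper avoids all of this by working in $\Re^3$ rather than in the plane. Starting from any witness cone $\pi(p_0)$ with $\pi(p_0)\cap\rho(S)=\{\rho(\quert),\rho(\quertt)\}$, take a cone $\pi(p^*)\subseteq\pi(p_0)$ that is \emph{minimal} under inclusion subject to still containing both lifted points (such $p^*$ exists because the set $\pi(p_0)\cap\pi^*(\rho(\quert))\cap\pi^*(\rho(\quertt))$ is compact). If one of the two points were in the interior of $\pi(p^*)$, sliding the apex slightly along the segment toward the other point would, by Lemma~\ref{lem_inclus}, keep both points inside a strictly smaller cone, contradicting minimality; and since the cone only shrinks, no third point can enter. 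Hence both points lie on the boundary of $\pi(p^*)$, which by Lemma~\ref{lem_height} says exactly that the projection of $p^*$ is equidistant to the two sites and strictly closer to them than to any other---the desired bisector point. This three-dimensional shrinking replaces your planar walk and closes the gap in one step.
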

\begin{proof}
Let $p=(x,y)\in \Re^2$ be a point on a bisector of $V_\quer(S)$ between sites $c_1$ and $c_2$ (corresponding to ranges $\quer_1$ and $\quer_2$ of scaling $\lambda_1$ and $\lambda_2$, respectively). By definition, we have that $d(c_1,p)-\lambda_1=d(c_2,p)-\lambda_2$ and $d(c',p)-\lambda' > d(c_1,p)-\lambda_1$ for all other homothets $\quert \in S$ of center $c'$ and scaling $\lambda'$. 

Let $w_{\min}=d(c_1,p)-\lambda_1$ and consider the cone $\pi(x,y,-w_{\min})$: by Lemma \ref{lem_height}, both $\rho(\quer_1)$ and $\rho(\quer_2)$ are on the boundary. Moreover, any other homothet $\quert\in S$ will satisfy $\quert \not\in \pi(x,y, -w_{\min})$ (since other ranges have larger weighted distance, which is equivalent to having smaller $z_{\quert}(p)$). That is, the cone $\pi(x,y,-w(\quer_1))$ contains exactly points $\rho(\quer_1)$ and $\rho(\quer_2)$. Moreover, no other point of $\rho(S)$ will be in the cone,  hence $\quer_1\quer_2\in E$. 

The other inclusion is shown analogously: let $\quer_1\quer_2$ be two ranges such that $\quer_1\quer_2\in E$. Let $(x,y,z)\in \Re^3$ be the apex of the minimal cone (with respect to inclusions) such that $\pi(x,y,z)\cap \rho(S)=\{\rho(\quer_1),\rho(\quer_2)\}$. Since $\pi(x,y,z)$ is minimal, both $\rho(\quer_1)$ and $\rho(\quer_2)$ must be on the boundary of the cone. In particular, $z_{\quer_1}(x,y)=z_{\quer_2}(x,y)$ and other ranges satisfy $z_{\quert}(x,y)<z_{\quer_1}(x,y)$ (for all other $\quert\in S$). Using again Lemma \ref{lem_height}, this is equivalent to the fact that $p=(x,y)$ is equidistant to sites $c_1$, $c_2$, and all other sites have strictly larger distance.% In particular, $p$ is in the bisector between $\quer_1$ and $\quer_2$, hence their associated regions must be adjacent.
\qed\end{proof}

\paragraph{Coloring.}
As an application of the above construction, we show how to solve the dual coloring problem. By Lemma \ref{lem_dualplanar}, we already know that $G_\pi(\rho(S))$ is 4-colorable. %It remains to show that the equivalent of Lemma \ref{lem_edge} is satisfied. 
 For any point $p\in\Re^2$, let $S_p$ be the set of ranges containing $p$ (i.e., $S_p=\{\quert\in S : p\in \quert\}$). 

\begin{lemma}\label{lem_enough}
For any $p\in \Re^2$ such that $|S_p| \geq 2$, there exist $\quer_1,\quer_2 \in S_p$ such that $\quer_1\quer_2\in E(S)$.  
\end{lemma}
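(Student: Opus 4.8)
The plan is to mimic the proof of Lemma~\ref{lem_edge} from the primal setting, but carried out through the lifting map $\rho$ and the conic ranges $\pi$. Fix $p\in\Re^2$ with $\cardin{S_p}\geq 2$, and consider the vertical cone $\pi(p,0)$. By Lemma~\ref{lem_dual}, the points of $\rho(S)$ lying in $\pi(p,0)$ are exactly $\rho(S_p)$, so this cone contains at least two points of $\rho(S)$. Now I would lift the cone continuously upward, i.e. consider $\pi(p,z)$ for increasing $z$; by Lemma~\ref{lem_inclus} (applied along the vertical segment joining apexes), the sets $\rho(S)\cap\pi(p,z)$ are nested and shrinking as $z$ grows. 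Let $z_0$ be the supremum of all $z$ for which $\pi(p,z)$ still contains at least two points of $\rho(S)$; by compactness of $S$ and the closedness of the cones this supremum is attained, and $\pi(p,z_0)$ contains two or more points of $\rho(S)$, each necessarily on its boundary (an interior point would survive a further small lift, contradicting maximality of $z_0$). Equivalently, in Voronoi language via Lemma~\ref{lem_dualVorDel} and Lemma~\ref{lem_height}, $p$ lies on a bisector of $V_\quer(S)$ (or at a vertex), and $z_0 = \max_{\quert\in S_p} z_{\quert}(p)$.

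If $\pi(p,z_0)\cap\rho(S)$ consists of exactly two points $\rho(\quer_1),\rho(\quer_2)$, then by the definition of $E(S)$ we immediately have $\quer_1\quer_2\in E(S)$ and we are done. The remaining case is the degeneracy where $\pi(p,z_0)$ has three or more points of $\rho(S)$ on its boundary — this is the analogue of the degenerate case in Lemma~\ref{lem_edge}, and I expect it to be the main obstacle. To handle it, I would move the apex away from the vertical ray while staying "tangent'' to one chosen boundary point and pushing the others out, exactly as in the shrinking argument of Lemma~\ref{lem_edge}: pick any $\rho(\quer_1)$ on the boundary of $\pi(p,z_0)$, and slide the apex along a path that keeps $\rho(\quer_1)$ on the boundary of the cone while strictly decreasing the value $z_{\quert}(p')$ of at least one of the other boundary ranges, until only two ranges remain on the boundary and none in the interior. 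The inclusion properties of Lemma~\ref{lem_inclus} guarantee that no new point of $\rho(S)$ ever enters the cone during this process (any point that was outside had strictly smaller $z$-value and the cone only shrinks in the relevant directions). The endpoint apex $p'$ then witnesses an edge $\quer_1\quer_2\in E(S)$ with $\quer_1,\quer_2\in S_{p'}\subseteq S_p$ — and since $p\in\quer_1\cap\quer_2$, in fact $\quer_1,\quer_2\in S_p$, which is what we need.

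A cleaner way to package the degenerate case, avoiding an explicit continuous motion, is to appeal directly to the Voronoi diagram $V_\quer(S)$: by Lemma~\ref{lem_dualVorDel} the graph $G_\pi(\rho(S))$ is the dual of $V_\quer(S)$, and the key point is that the Voronoi cells of $\quer_1,\ldots,\quer_n$ (with the convex distance function and additive weights) are nonempty, connected, full-dimensional regions that tile the region of the plane covered by $\bigcup S$, with the cell of $\quer_i$ contained in $\quer_i$ shrunk. If $\cardin{S_p}\geq 2$ then $p$ lies in the closure of at least two cells, so walking a short distance inside $\bigcup S$ from a point deep inside one cell toward a point deep inside another forces a crossing of a bisector edge separating two cells; the two sites bounding that edge give the desired pair $\quer_1\quer_2\in E(S)$, and one checks using $p\in\quer_i$ that this edge can be chosen so that both ranges still contain $p$. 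I would present the first (lifting/shrinking) argument as the main proof, since it parallels Lemma~\ref{lem_edge} and needs no extra facts about weighted Voronoi diagrams beyond Lemma~\ref{lem_height}; the obstacle is purely the bookkeeping of the degenerate multi-tangency case, and that is dispatched exactly as in Figure~\ref{fig_shrink}.
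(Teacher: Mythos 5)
Your overall strategy (lift the cone $\pi(p,0)$ vertically, stop at the last height $z_0$ with two points inside, then resolve degeneracies) is the same as the paper's, but there is a genuine gap at the step where you assert that every point of $\rho(S)\cap\pi(p,z_0)$ must lie on the boundary of the cone. The justification ``an interior point would survive a further small lift, contradicting maximality of $z_0$'' is wrong: maximality of $z_0$ only says that above $z_0$ the cone contains at most \emph{one} point, and a single surviving interior point is perfectly compatible with that. Concretely, $\pi(p,z)$ contains $\rho(\quert)$ exactly when $z\leq z_{\quert}(p)$, so $z_0$ is the second-largest value of $z_{\quert}(p)$ over $\quert\in S_p$; whenever the maximum is attained by a unique range, its image lies \emph{strictly inside} $\pi(p,z_0)$. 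Hence the degenerate case ``one interior point plus two or more boundary points'' really occurs, and your proof never treats it: your sliding step is formulated only for boundary points, and your dismissal of interior points is invalid. This is exactly the case where the paper must invoke the standing assumption that no range of $S$ is contained in another: it translates the apex toward a boundary point $\rho(\quer_2)$ and argues that the interior point $\rho(\quer_1)$ must reach the boundary before the apex arrives at $\rho(\quer_2)$, since otherwise $\quer_1\subseteq\quer_2$. The fact that your argument never uses that assumption is a symptom of the missing case.

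The same unhandled configuration also breaks your ``cleaner'' Voronoi packaging: it is not true that $\cardin{S_p}\geq 2$ forces $p$ to lie in the closure of two or more cells of $V_\quer(S)$. In the situation above, $p$ lies strictly inside the single cell of the range minimizing $d(c,p)-\lambda$, no matter how many ranges cover $p$, so no bisector passes through (or near) $p$ and the ``walk to a neighboring cell'' can cross a bisector between two sites whose ranges need not both contain $p$ — and the lemma requires $\quer_1,\quer_2\in S_p$. In the paper this membership comes for free because every cone produced during the apex translation remains inside $\pi(p,0)$ (via Lemma~\ref{lem_inclus}), so all surviving images correspond to ranges containing $p$; your sketch asserts the analogous fact (``one checks\dots'') without an argument. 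The remaining all-boundary case (three or more points on the boundary, none inside) is handled in your proposal essentially as in the paper — it corresponds to a vertex of $V_\quer(S)$, and perturbing to a point on an incident Voronoi edge yields the desired witness via Lemma~\ref{lem_dualVorDel} — so the fix you need is to reinstate the interior-point case and resolve it by the apex-translation argument together with the no-containment assumption (ranges in $I$ being dealt with separately in Theorem~\ref{theo_dual}).
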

\begin{proof}
From the second property of Lemma~\ref{lem_dual}, the number of points of $\rho (S)$ contained in the cone $\pi(p,0)$ is the number of ranges of $S$ containing $p$. The proof is now analogous to Lemma \ref{lem_edge}, where the shrinking operation is replaced by a vertical lifting of the cone. Let $z_0\geq 0$ be the largest value such that the cone $\pi(p,z_0)$ has two or more points of $\rho(S)$. If $\pi(p,z_0)$ contains exactly two points we are done, hence it only remains to treat the degeneracies. Remember that in such a case, there must be at least two points on the boundary of $\pi(p,z_0)$ (and possibly a point in its interior). 

If there is a point $\rho(\quer_1)$ in the interior, we select a second point $\rho(\quer_2)$ on the boundary of $\pi(p,z_0)$ and translate the apex of the cone towards $\rho(\quer_2)$. Note that when the apex is located at $\rho(\quer_2)$, the point $\rho(\quer_1)$ cannot be in the cone (since it would imply that $\quer_1 \subseteq \quer_2$, and we assumed otherwise). Thus, at some point in the translation $\rho(\quer_1)$ reaches the boundary of the cone. During this translation process, points that were on the boundary of $\pi(p,z_0)$ have either remained on the boundary or have left the cone. In either case, we obtain a cone $\pi(p',z')\subseteq \pi(z,0)$ with two (or more) vertices of $\rho(S)$ on its boundary and none in its interior.

If it contains exactly two points of $\rho(S)$ we are done. Otherwise, by duality of Lemma \ref{lem_dualVorDel}, $p'$ is a vertex of the weighted Voronoi diagram. We pick a point $p''$ in an edge $e$ of the Voronoi diagram incident to $p'$. Since $p'$ is in an edge of the Voronoi diagram, it is equidistant to two ranges $\quer_1,\quer_2 \in S_p$. Hence, by Lemma \ref{lem_dualVorDel}, when we do the lifting operation on $p'$ we will obtain a cone that exactly contains $\quer_1,\quer_2 \in S_p$ on its boundary and no other point in its interior.
\qed\end{proof}

%With these two results we obtain four-colorability as in the primal case. The four-coloring can be found by an $O(n^2)$-time algorithm, and, 

%As in the primal case, the bound is tight for a wide class of ranges (see Section \ref{sec_lb}).
\begin{theorem}\label{theo_dual}
$\bar{c}_\quer(2) \leq 4$.
\end{theorem}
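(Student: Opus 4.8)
The plan is to mimic the primal argument of Theorem~\ref{theo_primal}: exhibit a planar graph on $S$ whose edges ``hit'' every hyperedge, and then invoke the Four Color Theorem. All the necessary machinery is already in place, but it was developed under the standing assumption that no range of $S$ is contained in another, so the proof naturally splits into two parts.

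First I would settle the case in which no range of $S$ is contained in another. There $G_\pi(\rho(S))$ is planar by Lemma~\ref{lem_dualplanar}, hence admits a proper $4$-coloring $c$. For any point $p$ with $|S_p|\geq 2$, Lemma~\ref{lem_enough} provides two ranges $\quer_1,\quer_2\in S_p$ that are adjacent in $G_\pi(\rho(S))$, so $c(\quer_1)\neq c(\quer_2)$ and the hyperedge $S_p$ is non-monochromatic; thus $c$ is a valid coloring.

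To drop the assumption, let $S^\ast\subseteq S$ be the set of inclusion-maximal ranges, and for every $R\in S$ fix an inclusion-maximal range $m(R)\in S^\ast$ with $R\subseteq m(R)$ (taking $m(R)=R$ when $R$ is already maximal); such an $m(R)$ exists because $S$ is finite. No range of $S^\ast$ is contained in another, so the previous step yields a proper $4$-coloring $c$ of $G_\pi(\rho(S^\ast))$, which I would extend to $S$ by setting, for each $R\in S\setminus S^\ast$, $c(R)$ to be any color different from $c(m(R))$ --- possible since there is no constraint among non-maximal ranges. For the verification, fix $p$ with $|S_p|\geq 2$. As $p\in R\subseteq m(R)$ for every $R\in S_p$, the set $S_p\cap S^\ast$ is nonempty. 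If $|S_p\cap S^\ast|\geq 2$, then $S_p\cap S^\ast$ is exactly the hyperedge at $p$ of the dual hypergraph of $S^\ast$, so Lemma~\ref{lem_enough} applied to $S^\ast$ produces two ranges of $S_p\cap S^\ast$ adjacent in $G_\pi(\rho(S^\ast))$, hence of distinct colors. Otherwise $S_p\cap S^\ast=\{R^\ast\}$, so every $R\in S_p$ has $m(R)=R^\ast$ and therefore $R\subseteq R^\ast$; picking $R\in S_p$ with $R\neq R^\ast$ (possible since $|S_p|\geq 2$) gives $c(R)\neq c(m(R))=c(R^\ast)$. Either way $S_p$ contains two differently colored ranges, so $\bar c_\quer(2)\leq4$.

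I expect the only genuine obstacle to be this containment reduction: the cone construction --- concretely, the translation step in the proof of Lemma~\ref{lem_enough} and the inclusion relations of Lemma~\ref{lem_inclus} --- really uses that no range dominates another, so one cannot feed $S$ directly into Lemma~\ref{lem_enough}. Passing to the maximal ranges $S^\ast$ and then greedily coloring each nested range against a chosen maximal ``parent'' is the step that needs care; everything else is bookkeeping. Finally, since $S^\ast$ and $G_\pi(\rho(S^\ast))$ can be computed in polynomial time, a planar graph is $4$-colorable in polynomial time, and the greedy extension is immediate, the whole argument is constructive.
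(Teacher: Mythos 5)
Your proof is correct and follows essentially the same route as the paper: color the inclusion-maximal ranges via the planarity of $G_\pi(\rho(\cdot))$ (Lemmas~\ref{lem_dualplanar} and \ref{lem_enough}) plus the Four Color Theorem, then greedily give each contained range a color different from a chosen maximal range containing it. Your write-up is just a bit more explicit than the paper's in verifying the case where a point is covered by only one maximal range.
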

\begin{proof}
Let $I\subset S$ be the set of homothets included in other homothets of $S$. Recall that we initially assumed that $I$ was empty. Thus, to finish the proof it only remains to study the $I\neq \emptyset$ case. First we color $S\setminus I$ with 4 colors, using a 4-coloring of $G_\pi(\rho(S))$. This is possible from Lemma~\ref{lem_dualplanar}. Then, for each homothet $\quert$ of $I$ there exists one homothet $\quertt$ in $S \setminus I$ that contains it. We assign to $\quert$ any color different than the one assigned to $\quertt$. Any point $p\in \quert$ will also satisfy $p\in\quertt$, since $\quert\subseteq\quertt$, hence $p$ will be covered by two ranges of different colors.
%Let $p$ be a two-deep point: if $S_p\cap I=\emptyset$, by Lemma~\ref{lem_dualplanar} there is an edge $e=\quert\quertt\in E$ for some $\quert,\quertt\in S_p$. In particular, these two ranges will have different colors assigned. Otherwise, there exists a range $\quert\in S_p\cap I$ whose color is different than that of range $\quertt$, and $\quert\subset \quertt$. In particular $p\in \quert\cap\quertt$, and these ranges have different colors.
\qed\end{proof}

\section{Coloring Three-dimensional Hypergraphs}\label{sec_3D}

%Recall that the above two problems can be seen as a particular case of hypergraph coloring. For example, in the primal problem we can construct a hypergraph $G=(S,E)$, where $S$ is the ground set, and $e\in E$ if and only if there exists a homothet of $\quer$ containing exactly $e$ (the dual graph is defined analogously). 

Lemma~\ref{lem_dualplanar} actually generalizes the ``easy" direction of Schnyder's characterization of planar graphs. We first give a brief overview of this fundamental result. The {\em vertex-edge incidence poset} of a hypergraph $G=(V,E)$ is a bipartite poset $P=(V\cup E,\preceq_P)$, such that $e\preceq_P v$ if and only if $e\in E$, $v\in V$, and $v\in e$. The {\em dimension} of a poset $P=(S,\preceq_P)$ is the smallest $d$ such that there exists an injective mapping $f:S\to\Re^d$, such that $u\preceq_P v$ if and only if $f(u)\leq f(v)$, where the order $\leq$ is the componentwise partial order on $d$-dimensional vectors. When $P$ is the vertex-edge incidence poset of a hypergraph $G$, we will refer to this mapping as a {\em realizer} of $G$, and to $d$ as its {\em dimension}.

There exists a relation between the dimension of a graph and its chromatic number. For example, the graphs of dimension $2$ or less are subgraphs of the path, hence are 2-colorable. Schnyder pointed out that all 4-colorable graphs have dimension at most $4$ \cite{schnyder}, and completely characterized the graphs whose incidence poset has dimension $3$:
\begin{theorem}[\cite{schnyder}]
A graph is planar if and only if its dimension is at most three.
\end{theorem}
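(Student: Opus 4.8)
I would prove the two implications separately. The implication ``dimension at most three $\Rightarrow$ planar'' follows essentially from the construction of Section~\ref{sec_dual}, whereas ``planar $\Rightarrow$ dimension at most three'' is the substantial direction and goes through Schnyder woods.

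Suppose first that the vertex-edge incidence poset $P$ of $G=(V,E)$ admits a realizer $f\colon V\cup E\to\Re^3$; note that $f$ is injective, by antisymmetry of $P$. Use as three-dimensional ranges the upward octants $\{q\in\Re^3 : q\ge p\}$: these are convex cones that play the same role as the cones $\pi$ of Section~\ref{sec_dual}, with the inclusion properties of Lemma~\ref{lem_inclus} now trivial. Since $f$ realizes $P$, for an edge $e=uv$ and a vertex $w$ we have $f(e)\le f(w)$ if and only if $w\in e$; hence, among the points of $f(V)$, the octant with apex $f(e)$ contains exactly $f(u)$ and $f(v)$. After a generic perturbation of $f$ the vertical projections onto the plane $z=0$ of the points of $f(V)\cup f(E)$ are pairwise distinct; draw $G$ by placing each vertex $v$ at the projection of $f(v)$ and each edge $e=uv$ as the projection of the polyline through $f(u)$, $f(e)$, $f(v)$. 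This drawing has no crossing between two vertex-disjoint edges: if the drawings of $uu'$ and $vv'$ crossed at a point $x$, the vertical line through $x$ would meet the polyline of $uu'$ at a point $a$ and that of $vv'$ at a point $b$, which (after possibly swapping the two edges) satisfy $a\le b$ coordinatewise; here $a$ lies on a segment joining $f(e_{uu'})$ to an endpoint of $uu'$, so $a\ge f(e_{uu'})$, while $b$ lies on a segment joining $f(e_{vv'})$ to an endpoint $w\in\{v,v'\}$, so $b\le f(w)$; therefore $f(e_{uu'})\le a\le b\le f(w)$, whence $w\in e_{uu'}$, contradicting that $uu'$ and $vv'$ share no vertex. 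Crossings at bends, or between edges sharing an endpoint, are removed by local rerouting exactly as in the proof of Lemma~\ref{lem_dualplanar}. Hence $G$ is planar.

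For the converse, the incidence poset of a subgraph is the induced subposet, so dimension is monotone under taking subgraphs, and it suffices to prove $\dim(T)\le 3$ for every planar triangulation $T$ (graphs on at most two vertices have dimension at most two). Fix a plane embedding of $T$ with outer triangle $a_1a_2a_3$ and equip $T$ with a \emph{Schnyder wood}: an orientation and a coloring with colors $1,2,3$ of the interior edges such that every interior vertex has exactly one outgoing edge of each color, around each interior vertex the incident edges occur in the clockwise cyclic pattern ``outgoing $1$, incoming $3$, outgoing $2$, incoming $1$, outgoing $3$, incoming $2$'', and every interior edge at $a_i$ is colored $i$ and oriented towards $a_i$. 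Existence of a Schnyder wood I would prove by induction on $|V(T)|$, removing a vertex at a time along a canonical ordering of $T$ and extending the wood. For an interior vertex $v$, the outgoing edges of color $i$ from $v$ trace a directed path to $a_i$; the three such paths from $v$ meet only at $v$ and, with the outer triangle, partition $T$ into three regions $R_1(v)$, $R_2(v)$, $R_3(v)$, where $R_i(v)$ is the region not incident to $a_i$. Put $f(v)=(t_1(v),t_2(v),t_3(v))$ with $t_i(v)$ the number of interior faces of $T$ inside $R_i(v)$ (outer vertices receiving the natural extreme vectors); since $t_1(v)+t_2(v)+t_3(v)$ is constant, $f$ is injective on $V$ and no vertex vector dominates another. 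The crux is the \emph{region lemma}: for distinct $u,v$, $u$ lies in the interior of $R_i(v)$ if and only if $t_j(u)<t_j(v)$ for both $j\ne i$ (with a companion statement for vertices on the three paths).

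It then remains to extend $f$ to the edges: for $e=uv$ take $f(e)$ just below the coordinatewise minimum of $f(u)$ and $f(v)$, with ties broken so that distinct edges also get pairwise incomparable images. Using the region lemma for the adjacent pair $u,v$ — whose relative position in each other's regions is dictated by the color and orientation of $uv$ — one checks that $f(e)\le f(w)$ holds exactly for $w\in\{u,v\}$, so $f$ is a realizer of $P$ in $\Re^3$ and $\dim(T)\le 3$, completing the proof. The whole difficulty is concentrated in this second direction: proving that a Schnyder wood always exists and establishing the region lemma precisely enough that the face-count vectors, with a careful tie-break on the edges, really do form a three-dimensional realizer. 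The first direction, by contrast, is essentially Lemma~\ref{lem_dualplanar} specialized to octant ranges.
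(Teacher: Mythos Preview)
The paper does not actually prove this theorem; it is quoted from Schnyder~\cite{schnyder}, and only the easy direction is sketched in the paragraph that follows the statement. Your proof of that easy direction is correct and is exactly the argument the paper alludes to (and carries out for general cones in Lemma~\ref{lem_dualplanar}); the only cosmetic difference is that the paper projects onto the plane $x+y+z=0$ rather than $z=0$, which is immaterial since on a vertical line any two points are still comparable coordinatewise. Your outline of the hard direction via Schnyder woods, face-count coordinates, and the region lemma is the standard route from Schnyder's original paper and is correct as sketched; the present paper does not reproduce it at all. So where a comparison is possible your approach coincides with the paper's, and the remainder is the classical argument the paper simply cites.
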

The ``easy" direction of Schnyder's theorem states that every graph of dimension at most three is planar. The non-crossing drawing that is considered in one of the proofs is similar to ours, and simply consists of, for every edge $e=uv$, projecting the two line segments $f(e)f(u)$, and $f(e)f(v)$ onto the plane $x+y+z=0$~\cite{trotter,BD81}. It is easy to see that octants in $\Re^3$ satisfy the equivalent of our Lemma~\ref{lem_enough} (by translating the apex of the octant with vector $(1,1,1)$ for example). Combining this result with the Four Color Theorem gives the following result.
\begin{lemma}\label{lem_3dim}
Every hypergraph of dimension at most three is 4-colorable.
\end{lemma}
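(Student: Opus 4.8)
The plan is to mimic exactly the argument that was used for the dual hypergraph coloring in Section~\ref{sec_dual}, replacing the conic ranges $\pi(p)$ by axis-aligned octants and the Delaunay-type graph $G_\pi(\rho(S))$ by a graph extracted from a realizer of the hypergraph. So let $G=(V,E)$ be a hypergraph of dimension at most three, and fix a realizer $f\colon V\to\Re^3$; by a standard perturbation we may assume the image points are in general position (no two agree in any coordinate). For a vertex $v$ we think of $f(v)$ as a point, and for $p\in\Re^3$ let $\mathrm{Oct}(p)=\{x\in\Re^3 : x\geq p\ \text{componentwise}\}$ be the ``upward'' octant with apex $p$. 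The key translation of the realizer property is: a hyperedge $e\in E$ together with a vertex $v$ satisfies $v\in e$ iff $f(e)\leq f(v)$, i.e.\ iff $f(v)\in\mathrm{Oct}(f(e))$. Hence every hyperedge is realized as the set of vertices lying in some octant, and it suffices to $4$-color $V$ so that no octant that contains at least two of the points $f(V)$ is monochromatic.

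Next I would introduce the ``octant Delaunay graph'' $D$ on vertex set $V$: put $uv\in D$ iff there is an apex $p\in\Re^3$ with $\mathrm{Oct}(p)\cap f(V)=\{f(u),f(v)\}$ (with the usual tie-breaking convention, analogous to the one for $G_Q(S)$, when more than two points sit on the boundary of a minimal octant). The two things to check are (i) $D$ is planar, and (ii) every octant containing $\geq 2$ points of $f(V)$ contains a $D$-edge. For (i), repeat the proof of Lemma~\ref{lem_dualplanar} verbatim with octants instead of cones: draw $uv\in D$ as the union of the two projections onto the plane $x+y+z=0$ of the segments $f(p)f(u)$ and $f(p)f(v)$ for a witness apex $p$. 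Octants are convex and nested in the sense needed — if $q\in\mathrm{Oct}(p)$ then $\mathrm{Oct}(q)\subseteq\mathrm{Oct}(p)$, and any point on the segment $pq$ also has $q$ in its octant — which is precisely the analogue of Lemma~\ref{lem_inclus}. A crossing between vertex-disjoint edges $uu'$, $vv'$ then forces, via the vertical line through the crossing point and the two apex-segments it meets at heights $a\le b$, the chain $v\in\mathrm{Oct}(b)\subseteq\mathrm{Oct}(a)\subseteq\pi_1$, contradicting that the witness octant $\pi_1$ of $uu'$ contains only $f(u),f(u')$. Crossings between edges sharing an endpoint are removed by rerouting at the intersection, exactly as before.

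For (ii) I would run the shrinking/translation argument of Lemma~\ref{lem_edge}/Lemma~\ref{lem_enough}: given an octant $O$ with $|O\cap f(V)|\geq 2$, translate its apex along the direction $(1,1,1)$ (which shrinks $O$ while keeping it an octant of the same orientation) until just before it would contain fewer than two points; if the resulting minimal octant contains exactly two points $f(u),f(v)$ then $uv\in D$ by definition. In the degenerate case there are $\ge 2$ points on the boundary of the minimal octant and possibly one in the interior; since (by general position, and because the realizer is injective) no point of $f(V)$ dominates another, translating the apex toward one boundary point $f(v)$ eventually pushes any interior point onto the boundary without that boundary point leaving, so we reach an octant with $\geq 2$ boundary points and none interior; the tie-breaking convention then guarantees an edge of $D$ among those boundary points (or one can invoke the octant analogue of the weighted-Voronoi/ Lemma~\ref{lem_dualVorDel} picture, as the paper does). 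Finally, since $D$ is planar it is $4$-colorable by the Four Color Theorem, and coloring $V$ by any $4$-coloring of $D$ makes no hyperedge have all its vertices one color: a hyperedge $e$ with $|e|\geq 2$ corresponds to an octant containing $\geq 2$ points of $f(V)$, hence a $D$-edge, whose endpoints get distinct colors; hyperedges of size $\le 1$ are vacuously polychromatic. I expect the main obstacle to be the careful handling of the degenerate/boundary cases in step (ii) — making the translation-and-tie-breaking argument fully rigorous for octants in the presence of coincident coordinates — rather than the planarity step, which is a direct transcription of Lemma~\ref{lem_dualplanar}.
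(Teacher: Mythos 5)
Your proposal is correct and follows essentially the same route as the paper, which obtains Lemma~\ref{lem_3dim} by taking a three-dimensional realizer, drawing the octant Delaunay graph via Schnyder-style projection of apex--vertex segments onto the plane $x+y+z=0$ (the octant analogue of Lemma~\ref{lem_dualplanar}), noting the octant analogue of Lemma~\ref{lem_enough} via translating the apex along $(1,1,1)$, and invoking the Four Color Theorem. You merely spell out in more detail the degenerate cases and the inclusion properties that the paper leaves as ``easy to see.''
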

%\begin{proof}
%Given $G=(V,H)$ and its realizer $f:V\cup H \rightarrow \Re^3$, we construct $G'=(V,E)$ as a graph with the same vertex set, where we add the edge $uv\in E$ if there exists $p\in \Re^3$ such that $p\leq u$,$p\leq v$, and add edges to $G$ until it becomes maximal (that is, adding. Let $G'$ be the obtained hypergraph, and $G''$ the graph containing the edges of size exactly two of $G''$. It is easy to see that any valid coloring of $G'$ is also valid for $G$. Analogously to Lemma~\ref{lem_enough}, we show that any hyperedge $e\in G'$ contains $u,v\in e$ 
%Recall that $G''$ is a graph of dimension three (since it only contains edges of size two). Hence, by Schnyder's theorem and the Four Color theorem it is four colorable. 
%\end{proof}

\subsection{Upper bounds for three-dimensional hypergraphs}\label{sec_3dim}
We now adapt the above result for higher values of $k$. That is, we are given a three-dimensional hypergraph $G=(V,H)$ and a constant $k\geq 2$. We would like to color the vertices of $G$ such that any hyperedge $e\in H$ contains at least $\min\{|e|,k\}$ vertices with different colors. %As before we are interested in the colorings that use the minimum number possible of colors. From Corollary \ref{cor_3dim} we directly have that $c(2)\leq 4$ , thus we will focus on higher values of $k$.
 We denote by $c_3(k)$ the minimum number of colors so that any three-dimensional hypergraph can be suitably colored. Note that the problem is self-dual: any instance of the dual coloring problem can be transformed into a primal coloring problem by symmetry.

For simplicity, we assume that no two vertices of $V$ in the realizer share an $x$, $y$ or $z$ coordinate. This can be obtained by making a symbolic perturbation of the point set in $\Re^3$. Recall that, from the definition of the realizer, the point $q_e$ dominates $u\in S$ if and only if $u\in e$. For any hyperedge $e\in H$, there exist many points in $\Re^3$ that dominate the points of $e$. We also assume that hyperedge $e$ is mapped to the minimal point $q_e\in\Re^3$, obtained by translating $q_e$ in each of the three coordinates until a point of hits the boundary of the upper octant whose apex is $q_e$. 

For any hyperedge $e\in H$, we define the $x$-extreme of $e$ as the point $x(e)\in e$ whose image has smallest $x$-coordinate. Analogously we define the $y$ and $z$-extremes and denote them $y(e)$ and $z(e)$, respectively. We say that a hyperedge $e$ is {\em extreme} if two extremes of $e$ are equal.

\begin{lemma}\label{lem_boundeg}
For any $k\geq 2$, $G$ has up to $3n$ extreme hyperedges of size exactly $k$.
\end{lemma}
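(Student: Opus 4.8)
The plan is to show that for each of the three coordinate directions, there are at most $n$ extreme hyperedges of size exactly $k$ whose two coinciding extremes are the two extremes in that pair of directions, and then to combine the three bounds. So fix attention on hyperedges $e$ with $x(e)=y(e)$; I claim there are at most $n$ such hyperedges of size exactly $k$. The key observation is that, for such a hyperedge, the common extreme vertex $v=x(e)=y(e)$ determines $e$ almost completely: since $e$ has size exactly $k$, and $v$ simultaneously minimizes both the $x$- and the $y$-coordinate over $e$, the set $e$ is forced to be a specific set determined by $v$ and $k$. Concretely, I would argue that if $v$ is both the $x$-extreme and the $y$-extreme of $e$, then every other vertex of $e$ has both a larger $x$- and a larger $y$-coordinate than $v$; equivalently $e\setminus\{v\}$ lies in the quadrant "up and to the right" of $v$ in the $(x,y)$-projection. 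Among the vertices in that quadrant, the minimal octant point $q_e$ that realizes $e$ is pinned down (up to the $z$-coordinate) by $v$, and the requirement $|e|=k$ then selects $e$ as the $k$ vertices of $V$ in that quadrant that are "lowest" in an appropriate sense. Since different hyperedges with the same common extreme $v$ and the same size $k$ would have to be the same set, the map $e\mapsto x(e)$ is injective on this class, giving at most $n$ such hyperedges.

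The main obstacle — and the step I would spend the most care on — is making precise the claim that $v$ together with $k$ determines $e$ when $v$ is a double extreme. The subtlety is the third coordinate: the octant realizing $e$ is free to slide in $z$, so one must check that the $z$-direction does not allow two genuinely different $k$-element hyperedges to share the same $xy$-double-extreme $v$. I expect the right way to handle this is to use the minimality assumption on $q_e$: because $q_e$ is pushed down in all three coordinates until it touches a vertex, and because $v$ already realizes the $x$- and $y$-contacts, the $z$-coordinate of $q_e$ is the $z$-coordinate of the $z$-extreme $z(e)$, and the whole set $e$ is exactly $\{u\in V : u\le q_e\}$. One then checks that, ranging over the at most $n$ possible choices of the vertex that plays the role of $z(e)$, at most one choice yields $|e|=k$ — or, more robustly, that $e$ is determined by $v$ alone once we also know it has size exactly $k$, because the vertices dominating $v$ in the $xy$-quadrant are linearly ordered by their $z$-coordinate (by the general-position assumption) and $e$ must be an initial segment of length $k-1$ of that order together with $v$ itself. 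This linear-order argument is what drives injectivity.

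Finally I would assemble the pieces: a hyperedge is extreme iff at least one of the three pairs of extremes coincides, so the set of extreme hyperedges of size $k$ is the union of three classes (the $x=y$ class, the $y=z$ class, and the $x=z$ class), and by the argument above each class has at most $n$ members. Hence $G$ has at most $3n$ extreme hyperedges of size exactly $k$, as claimed. I would remark that the general-position assumption on the realizer (no two vertices sharing a coordinate), together with the convention that $q_e$ is the minimal dominating point, is exactly what is needed to make the "determined by $v$ and $k$" step go through cleanly; without it one would have to worry about ties among candidate extremes and the bound would need a small adjustment.
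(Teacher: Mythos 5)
Your proof is correct and takes essentially the same route as the paper: both charge each size-$k$ extreme hyperedge to its repeated extreme vertex and show that, for a fixed vertex and a fixed coinciding pair of coordinate directions, the hyperedge of size exactly $k$ is unique (the paper phrases this as the two witness octants sharing their $x$- and $y$-apex coordinates and hence being nested, so equal sizes force $e_1=e_2$; your $z$-threshold/linear-order argument is the same fact), giving at most $3n$ in total. One cosmetic caution: keep the orientation convention consistent (extremes are coordinate \emph{minima} touching the apex of the upper octant, so with that convention the $k-1$ selected quadrant points are those with the \emph{largest} $z$-coordinates rather than an ``initial segment'' of the increasing-$z$ order); this does not affect the validity of the argument.
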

\begin{proof}
We charge any extreme hyperedge to the point that is repeated. By the pigeonhole principle, if a point is charged more than three times, there exist two extreme hyperedges $e_1,e_2$ of size exactly $k$ that charge on the same coordinates. Without loss of generality, we have $x(e_1)=x(e_2)$ and $y(e_1)=y(e_2)$. Let $q_1$ and $q_2$ be the mappings of $e_1$ and $e_2$, respectively. By hypothesis, the $x$ and $y$ coordinates of $q_1$ and $q_2$ are equal. Without loss of generality, we assume that $q_1$ has higher $z$ coordinate than $q_2$. In particular, we have $q_1\subset q_2$. Since both have size $k$, we obtain $e_1=e_2$. 
\qed\end{proof}

Let $S$ be the the 3-dimensional realizer of the vertices of $G$. For simplicity, we assume that $G$ is maximal. That is, for any $e\subseteq S$, we have $e\in H$ if and only if there exists a point $q_e\in \Re^3$ dominating exactly $e$. Since we are only adding hyperedges to $G$, any coloring of this graph is a valid coloring of $G$. 

For any $2\leq k\leq n$, we define the graph $G_k(S)=(S,E_k)$, where for any $u,v\in S$ we have $uv\in E_k$ if and only if there exists a point $q\in \Re^3$ that dominates $u,v$ and at most $k-2$ other points of $S$ (that is, we replace hyperedges of $G$ whose size is at most $k$ by cliques). The main property of this graph is that any proper coloring of $G_k(S)$ induces a polychromatic coloring of $G$. Using Lemma~\ref{lem_boundeg}, we can bound the number of edges of $G_k(S)$.
%: let $e\in H$ be a hyperedge of $G$. By definition, its corresponding three dimensional point $p_e$ dominates $|e|$ points (i.e., the octant whose apex is $p_e$ contains $|e|$ points). If $|e|> k$, we translate the apex upwards (say, in the $z$ coordinate) until the corresponding octant contains exactly $k$ points. These points will form a clique of size $\min\{|e|,k\}$ in $G_k(S)$. In particular, the colors of the points will be different, hence $e$ will be polychromatic.
\begin{lemma}\label{lem_edges}
For any set $S$ of points and $2\leq k\leq n$, graph $G_k(S)$ has at most $3(k-1)n-6$ edges.
\end{lemma}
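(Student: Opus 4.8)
My plan is to prove this by induction on $k$, showing that the edges that appear when passing from $G_{k-1}(S)$ to $G_k(S)$ inject into the set of extreme hyperedges of size exactly $k$, which Lemma~\ref{lem_boundeg} already bounds by $3n$. First some notation: for $u,v\in S$ let $q^*(u,v)$ be the point whose three coordinates are $\min(u_x,v_x)$, $\min(u_y,v_y)$, $\min(u_z,v_z)$, and put $e^*(u,v)=\{w\in S:w\ge q^*(u,v)\}$. Since $G$ is maximal, $e^*(u,v)$ is a hyperedge; it is the smallest one containing both $u$ and $v$, and $q^*(u,v)$ is its canonical apex. Consequently $uv\in E_k$ if and only if $|e^*(u,v)|\le k$. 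For the base case $k=2$, the graph $G_2(S)$ is the Delaunay graph for upper octants in $\Re^3$, which is planar by exactly the non-crossing-drawing argument used for Lemma~\ref{lem_dualplanar} (cf.\ the discussion preceding Lemma~\ref{lem_3dim}); hence $|E_2|\le 3n-6$ for $n\ge3$, which is $3(2-1)n-6$.

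For the inductive step, assume $|E_{k-1}|\le 3(k-2)n-6$; since $E_{k-1}\subseteq E_k$ it suffices to show $|E_k\setminus E_{k-1}|\le 3n$. Take $uv\in E_k\setminus E_{k-1}$. Then $|e^*(u,v)|\le k$ because $uv\in E_k$, and $|e^*(u,v)|\ge k$ because otherwise $e^*(u,v)$ itself is a hyperedge of size at most $k-1$ witnessing $uv\in E_{k-1}$; so $|e^*(u,v)|=k$ exactly, and I charge $uv$ to $e^*(u,v)$. The key observation is that each of the three extremes of $e^*(u,v)$ is $u$ or $v$: the vertex of $e^*(u,v)$ with smallest $x$-coordinate has $x$-coordinate equal to $q^*(u,v)_x=\min(u_x,v_x)$, which by the general-position assumption is the $x$-coordinate of exactly one of $u,v$; similarly for $y$ and $z$. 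By pigeonhole two of the three extremes coincide, so $e^*(u,v)$ is an extreme hyperedge of size exactly $k$, and by Lemma~\ref{lem_boundeg} there are at most $3n$ of these.

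It remains to check that no such hyperedge is charged more than once. If an extreme hyperedge $e$ of size $k$ has two distinct extreme vertices $a\ne b$, then any $uv$ with $e^*(u,v)=e$ has $\{x(e),y(e),z(e)\}\subseteq\{u,v\}$, forcing $\{u,v\}=\{a,b\}$, so $ab$ is the only pair that can charge to $e$. This yields $|E_k\setminus E_{k-1}|\le 3n$ and hence $|E_k|\le 3(k-2)n-6+3n=3(k-1)n-6$. The step I expect to be the real obstacle is exactly this uniqueness claim in its degenerate form: when all three extremes of $e$ coincide — equivalently, the apex of $e$ is itself one of its vertices $p$, i.e.\ $p$ is dominated in the realizer by the other $k-1$ vertices of $e$ — then all $k-1$ pairs $pw$ with $w\in e\setminus\{p\}$ charge to $e$ and the count breaks. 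This is the three-dimensional counterpart of a range being contained in another range, and I would handle it as in the proof of Theorem~\ref{theo_dual}: discard the dominated vertices before forming the realizer (so that we may assume the realizer points form an antichain and this case never arises), and argue separately that the discarded vertices can be recolored afterwards. Making that reduction rigorous for general $k$ — rather than the pigeonhole counting, which is routine — is where the actual care is needed.
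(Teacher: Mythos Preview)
Your argument is essentially the paper's: induct on $k$, charge each edge $uv\in E_k\setminus E_{k-1}$ to its minimal witnessing hyperedge (your $e^*(u,v)$; the paper reaches the same object by translating the apex of an arbitrary witness), observe that $\{x(e),y(e),z(e)\}\subseteq\{u,v\}$ so $e$ is extreme, and invoke Lemma~\ref{lem_boundeg} for the $3n$ bound. The injectivity of the charging---your uniqueness claim---is exactly what the paper uses as well.

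On the degeneracy you flag (all three extremes of $e$ coinciding at a single vertex $p$): the paper disposes of this with the one-line assertion ``a point cannot be extreme in the three directions,'' without justification. You are right that this is equivalent to $p$ being dominated in the realizer, and that it is the only genuine obstacle; so you have been at least as careful here as the paper. Your proposed fix---restrict to an antichain as in Theorem~\ref{theo_dual}---is the natural move, but note that it yields the edge bound only for antichains $S$, not for arbitrary $S$ as the lemma is literally stated; for the downstream coloring theorem (Theorem~\ref{theo_colz}) this restricted version is in fact what is needed, though the reinsertion of dominated vertices into a $k$-polychromatic coloring for $k\ge3$ is not as immediate as in the $k=2$ case and would still need to be written out.
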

\begin{proof}
The claim is true for $k=2$ from Schnyder's characterization. Notice that $E_{k-1}\subseteq E_k$, thus it suffices to bound the total number of edges $uv\in E_k\setminus E_{k-1}$. By definition of $G_k$ and $G_{k-1}$, there must exist a hyperedge $e$ of size exactly $k$ such that $u,v\in e$. In the three-dimensional realizer, this corresponds to a point $q_e\in \Re^3$ that dominates $u,v$ and $k-2>0$ other points of $S$.

We translate the point $q_e$ upward on the $x$ coordinate until it dominates only $k-1$ points. By definition, the first point to leave must be the $x$-extreme point $x(e)$. After this translation we obtain point $q'_e$ that dominates $k-1$ points. All these points will form a clique in $E_{k-1}$. Since $uv\not\in E_{k-1}$,  we either have $u=x(e)$ or $v=x(e)$. We repeat the same reasoning translating in the $y$ and $z$ coordinates instead and, combined with the fact that a point cannot be extreme in the three directions, either $uv\in E_{k-1}$ or $u$ and $v$ are the only two extremes of $e$. In  particular, the hyperedge $e$ is extreme. From Lemma \ref{lem_boundeg} we know that this case can occur at most $3n$ times, hence we obtain the recurrence $|E_k|\leq |E_{k-1}|+3n$. 
\qed\end{proof}

\begin{theorem}\label{theo_colz}
For any $k\geq 2$, we have $c_3(k)\leq 6(k-1)$.
\end{theorem}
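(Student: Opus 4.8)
The plan is to combine the edge bound of Lemma~\ref{lem_edges} with a standard degeneracy argument. Recall that, by construction, every proper coloring of $G_k(S)$ is a polychromatic coloring of the maximal hypergraph on the realizer $S$, hence of $G$ itself; moreover, passing from $G$ to this maximal hypergraph only strengthens the coloring requirement, and the symbolic perturbation separating coordinates does not change which points dominate which, so it preserves all hyperedges. Thus it suffices to show that $G_k(S)$ is $6(k-1)$-colorable.

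First I would observe that the bound of Lemma~\ref{lem_edges} is hereditary. Let $S'\subseteq S$ and consider the subgraph of $G_k(S)$ induced on $S'$. Every edge $uv$ of this subgraph arises from a point $q\in\Re^3$ dominating $u,v$ and at most $k-2$ other points of $S$; in particular $q$ dominates $u,v$ and at most $k-2$ other points of $S'$, so $uv$ is also an edge of $G_k(S')$. Hence the subgraph induced on $S'$ is a subgraph of $G_k(S')$, and by Lemma~\ref{lem_edges} it has at most $3(k-1)|S'|-6 < 3(k-1)|S'|$ edges. Consequently every subgraph of $G_k(S)$ on $n'$ vertices has average degree strictly less than $6(k-1)$, so it contains a vertex of degree at most $6(k-1)-1$; that is, $G_k(S)$ is $(6(k-1)-1)$-degenerate.

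Finally, order the vertices of $G_k(S)$ by repeatedly removing a vertex of minimum degree in the current graph, and then color greedily from last to first: when a vertex is colored, it has at most $6(k-1)-1$ already-colored neighbours, so one of $6(k-1)$ colors is always available. This yields a proper $6(k-1)$-coloring of $G_k(S)$, hence a polychromatic $6(k-1)$-coloring of $G$, proving $c_3(k)\leq 6(k-1)$. I do not expect a genuine obstacle here; the only step requiring care is establishing that the edge bound is hereditary (so that a degeneracy bound, rather than merely a global average-degree bound, is available), which is exactly the observation in the second paragraph.
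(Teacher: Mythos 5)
Your proof is correct and follows essentially the same route as the paper: the paper likewise deduces from Lemma~\ref{lem_edges} that every induced subgraph of $G_k(S)$ is a subgraph of $G_k(S')$ for the corresponding $S'\subseteq S$, hence has average degree below $6(k-1)$, making $G_k(S)$ $(6(k-1)-1)$-degenerate and therefore $6(k-1)$-colorable. Your explicit attention to the hereditary nature of the edge bound is exactly the observation the paper makes in passing.
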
 
\begin{proof}
From Lemma \ref{lem_edges} and the handshake lemma, the average degree of $G$ is strictly smaller than $6(k-1)$. In particular, there must exist a vertex whose degree is at most $6(k-1)-1$. Moreover, this property is also satisfied by any induced subgraph, as any edge $(u,v)\in E_k$ is an edge of $G_k(S\setminus\{w\})$, $\forall w\neq u,v$. Hence, for any $S'\subseteq S$, the induced subgraph $G_k(S) \setminus S'$ is a subgraph of $G_k(S\setminus\{S'\})$. In particular, the graph $G_k(S)$ is $(6(k-1)-1)$-degenerate, and can therefore be colored with $6(k-1)$ colors.
\qed\end{proof}

Note that dual hypergraphs induced by collections of homothetic triangles have dimension at most 3, so our result directly applies.
\begin{corollary}\label{cor_triangle}
For any $k\geq 3$, any set $S$ of homothets of a triangle can be colored with $6(k-1)$ colors so that any point $p\in\Re^2$ covered by $r$ homothets is covered by $\min\{r,k\}$ homothets with distinct colors. 
\end{corollary}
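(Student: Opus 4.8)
The plan is to reduce Corollary~\ref{cor_triangle} directly to Theorem~\ref{theo_colz} by exhibiting, for any finite family $S$ of homothets of a fixed triangle $T$, a three-dimensional realizer of the dual hypergraph induced by $S$. Since the three-dimensional coloring problem is self-dual (as noted before Lemma~\ref{lem_boundeg}), it suffices to check that the vertex-edge incidence poset of this dual hypergraph has dimension at most $3$; then $c_3(k)\leq 6(k-1)$ gives the claimed bound with exactly $6(k-1)$ colors.

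First I would fix an affine frame adapted to $T$: write the three edge directions of $T$ as inward normals $n_1,n_2,n_3$, so that a point $x\in\Re^2$ lies in a homothet $T(c,\lambda)$ (center $c$, scaling $\lambda>0$) if and only if $\langle n_i, x-c\rangle \le \lambda h_i$ for $i=1,2,3$, where $h_i$ are the (fixed) support values of $T$ about its center. Equivalently, setting $a_i(x):=\langle n_i,x\rangle$, containment reads $a_i(x) \le a_i(c)+\lambda h_i$ for all $i$. The key observation is that this is a system of three monotone inequalities, so I would map each range $T(c,\lambda)$ to the point $f(T(c,\lambda)) := \bigl(a_1(c)+\lambda h_1,\; a_2(c)+\lambda h_2,\; a_3(c)+\lambda h_3\bigr)\in\Re^3$, and each hyperedge $e=S_p$ (the set of ranges containing a witness point $p$) to the point $g(e):=\bigl(a_1(p),a_2(p),a_3(p)\bigr)$. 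Then $e\preceq g(e)$ componentwise exactly captures $p\in R$ for all $R\in e$, i.e.\ this is a realizer provided injectivity holds; injectivity of $f$ on $S$ follows because $a_1,a_2$ determine $c$ (the $n_i$ span $\Re^2$) and then $\lambda$, and for hyperedges one may, as in Section~\ref{sec_3D}, take the minimal witness point and perturb symbolically. One subtlety: the $h_i$ could have mixed signs depending on where the center sits relative to $T$, but since $T$ contains the origin we may translate so that the center is interior, making all $h_i>0$ and all three coordinates genuinely increasing in $\lambda$; affine invariance of the homothety class lets us do this without loss of generality.

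With the realizer in hand, I would invoke Theorem~\ref{theo_colz}: the dual hypergraph of $S$ has dimension at most $3$, hence is $6(k-1)$-colorable so that every hyperedge $e$ gets $\min\{|e|,k\}$ distinct colors; translating back through the dual hypergraph's definition, this says precisely that any point $p$ covered by $r$ homothets of $S$ is covered by $\min\{r,k\}$ homothets with distinct colors. The restriction $k\geq 3$ in the corollary (versus $k\geq 2$ in the theorem) is harmless — it is presumably stated because $k=2$ is already covered by Theorem~\ref{theo_dual} with the better bound $4$.

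The main obstacle is verifying that the componentwise order on the images genuinely coincides with the incidence relation of the hypergraph, and not merely implies it — that is, that whenever $g(e) \ge f(R)$ componentwise we actually have $R\in e$. This is where the three-inequality characterization of triangle-containment is essential: a homothet of a triangle is cut out by exactly three halfplanes with fixed normals, so three coordinates suffice, whereas for a disk or a square no such finite description aligned to fixed directions exists. I would spell out this equivalence carefully, together with the minimal-witness/symbolic-perturbation argument needed to make $g$ injective on hyperedges, and note that degenerate configurations (points on range boundaries, several ranges sharing support lines) are handled exactly as in the setup preceding Lemma~\ref{lem_boundeg}.
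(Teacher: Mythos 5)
Your proposal is correct and follows essentially the same route as the paper: the paper's entire proof of Corollary~\ref{cor_triangle} is the one-line observation that dual hypergraphs induced by homothetic triangles have dimension at most three, so Theorem~\ref{theo_colz} applies, and your support-value map $R\mapsto\bigl(a_1(c)+\lambda h_1,a_2(c)+\lambda h_2,a_3(c)+\lambda h_3\bigr)$ together with $p\mapsto\bigl(a_1(p),a_2(p),a_3(p)\bigr)$ is exactly the standard realizer witnessing that fact, with containment in a triangle being equivalent to three monotone inequalities. The extra worries you raise (signs of the $h_i$, injectivity) are harmless and are handled by the same symbolic-perturbation convention the paper already uses in Section~\ref{sec_3D}.
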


%Note that this result extends the result of Theorem \ref{theo_dual} (for the case in which $\quer$ is a triangle) to larger values of $k$. Other than being more general, this proof shows some interesting properties. 
%Consider the primal variant of Corollary \ref{cor_triangle}: we would like to show that  a set $S$ of points can be colored with few colors such that any homothet $\Delta$ of a fixed triangle will contain $\min\{|\Delta\cap S|, k\}$ points with different colors. Although the problems are clearly similar, they are not equivalent. However, both are a particular case of Theorem \ref{theo_colz}, hence the same result holds. 
%Indeed, the primal variant corresponds to the case in which all vertices of the hypergraph lie on a plane. Analogously, the primal variant corresponds to the case in which  first  . The case in which both points and hyperedges 
%by Theorem \ref{theo_triangle}, we know that any instance of the dual problem will generate a three-dimensional hypergraph. Since the dual of a three dimensional hypergraph is another three dimensional hypergraph (as mentioned in Section \ref{sec_3dim}), we can apply Theorem \ref{theo_colk} to both problems. 
%In the following Section we will show lower bounds for $c_\quer(k)$ and $\bar{c}_\quer(k)$ for many different ranges (among them the triangle). Since triangle containment posets have dimension at most 3, this will directly give the same lower bounds for $c_3(k)$. That is, $c_3(k)\geq c_\Delta(k)$, where $\Delta$ is any fixed triangle. 

\section{Lower Bounds}\label{sec_lb}
We now give a lower bound on $c_\quer(k)$. The normal vector of $\quer$ at the boundary point $p$ is the unique unit vector orthogonal to the halfplane tangent to $\quer$ at $p$, if it is well-defined. We say that a range has $m$ distinct directions if there exist $m$ different points with defined, pairwise linearly independent normal vectors. 
\begin{lemma}
\label{lem_lowerprim}
Any range $\quer$ with at least three distinct directions satisfies $c_\quer(k)\geq 4\lfloor k/2 \rfloor$ and $\bar{c}_\quer(k)\geq 4\lfloor k/2 \rfloor$.
\end{lemma}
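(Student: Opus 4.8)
The plan is to build explicit point sets (for the primal bound) and explicit families of homothets (for the dual bound) that force many colors, by concatenating $\lfloor k/2 \rfloor$ disjoint copies of a gadget that already forces $4$ colors for the $k=2$ problem. The starting gadget is the one sketched in Figure~\ref{fig_ranges}: four points placed so that every pair of them can be isolated by some homothet of $\quer$ (either a translate or a scaling). Concretely, I would use the three distinct directions to argue that a sufficiently small translate of $\quer$ can be made tangent to any one point of a small cluster while avoiding the others, and that a suitably scaled copy can capture exactly a chosen pair; this makes the Delaunay-type graph on those four points contain all six edges, i.e.\ a $K_4$, so $4$ colors are needed already for $k=2$.

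Next I would boost the gadget from $k=2$ to general $k$ by ``inflating'' each of the four points into a tiny cluster of $\lfloor k/2\rfloor$ points (placed in general position within a disk of radius $\varepsilon$). For the polychromatic requirement with parameter $k$: take any hyperedge that, in the original $k=2$ instance, picked out a pair of the four points; in the inflated instance the corresponding homothet now captures the two $\lfloor k/2\rfloor$-clusters, i.e.\ a hyperedge of size $2\lfloor k/2\rfloor \le k$, and polychromaticity forces all $2\lfloor k/2\rfloor$ of its vertices to receive distinct colors. Running this over the six pairs of clusters forces the whole $4\lfloor k/2\rfloor$ points to be rainbow-colored, giving $c_\quer(k)\ge 4\lfloor k/2\rfloor$. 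A key point to check here is that a homothet isolating cluster $A$ from the others can be chosen small enough to also contain \emph{all} of $A$; this is where the ``three distinct directions'' hypothesis is used, since it guarantees enough freedom in shape to trap a small cluster while excluding nearby ones.

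For the dual bound $\bar c_\quer(k)\ge 4\lfloor k/2\rfloor$ I would dualize the same construction (right side of Figure~\ref{fig_ranges}): replace each of the $4\lfloor k/2\rfloor$ points by a homothet of $\quer$, arranged so that for each of the six ``pairs of clusters'' there is a witness point of $\Re^2$ covered exactly by the $2\lfloor k/2\rfloor$ homothets of those two clusters and by no others. Then any valid dual coloring must give those $2\lfloor k/2\rfloor$ homothets distinct colors, and again ranging over all six pairs forces all $4\lfloor k/2\rfloor$ homothets to be mutually distinct. The translation-vs.-scaling trick from the $k=2$ picture is what lets a single witness point see exactly the desired sub-collection: some of the homothets of a cluster are placed as small scaled copies and some as translates so that the common-intersection pattern is as required.

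The main obstacle I anticipate is purely geometric bookkeeping: verifying that, using only the ``three distinct directions'' of $\quer$ (rather than strict convexity or smoothness), one can realize \emph{every} one of the six pairwise patterns simultaneously with a \emph{single} placement of the four clusters — in the primal case by exhibiting the requisite homothets, and in the dual case by exhibiting the requisite witness points. Once the four-cluster/six-pattern gadget is shown realizable, the inflation argument and the counting of distinct colors are routine. I would therefore spend most of the write-up pinning down the gadget's coordinates (or an existence argument via a limiting/compactness argument on the directions), and then state the color count as an immediate consequence.
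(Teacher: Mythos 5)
Your high-level plan coincides with the paper's: the four-point (four-cluster) gadget of Figure~\ref{fig_ranges} in which every one of the six pairs can be isolated, inflation of each point into a cluster of $\lfloor k/2\rfloor$ points, and the observation that any two clusters together form a hyperedge of size $2\lfloor k/2\rfloor\leq k$ that must be rainbow, which by pigeonhole forces $4\lfloor k/2\rfloor$ colors; the dual instance is handled by the mirror-image construction with witness points. That inflation-and-counting part of your write-up is correct and is exactly what the paper does.

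The genuine gap is the step you yourself flag and postpone: you never establish that a range $\quer$ having only three distinct directions admits a single placement of four points (respectively four clusters, respectively four groups of homothets with the required witness points) such that \emph{all six} pairwise patterns are realizable by homothets of $\quer$. This is not routine bookkeeping; it is the only place where the hypothesis enters, and it cannot be waved through, since for a parallelogram (only two pairwise independent normal directions) no such gadget exists and the bound drops to $3\lfloor k/2\rfloor$, as the paper notes right after the lemma. Moreover, the mechanism you sketch (a small translate of $\quer$ tangent to one point of a cluster while avoiding the others) does not address what is needed: isolating a single point is trivial, whereas the gadget requires isolating each prescribed \emph{pair} of clusters, and small translates cannot do this for all six pairs. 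The paper closes this step concretely: blow up $\quer$ by a huge scaling factor so that, near a bounded configuration, its homothets behave like halfplanes whose inner normals are the three given directions; after an affine transformation these may be assumed to be the three orientations associated with an equilateral triangle, and the four points are taken to be the vertices and the incenter of such a triangle, so each pair of points is cut off by one of the near-halfplanes; the dual gadget is obtained analogously, using ranges that contain exactly two of the four points and the points themselves as witnesses, before the same cluster inflation. To make your proof complete you must supply this (or an equivalent) explicit realization of the gadget in both the primal and the dual setting; with it in place, your inflation and counting arguments go through verbatim.
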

\begin{proof}
We first show that $c_\quer(2)\geq 4$. Scale $\quer$ by a large enough value so that it essentially becomes a halfplane. By hypothesis, we can obtain halfplane ranges with three different orientations. By making an affine transformation to the problem instance, we can assume that the halfplanes are of the form $x\geq c$, $y\geq \sqrt{3}x+c$ or $y\leq \sqrt{3}x+c$ for any constant $c\in \Re$ (i.e. the directions of the equilateral triangle). Let $\Delta$ be the largest equilateral triangle with a side parallel to the abscissa that can be circumscribed in $\quer$. Let $p_1,p_2, p_3$ and $p_4$ be the vertices and the incenter of $\Delta$, respectively (see Figure \ref{fig_ranges}). Note that any two points of $\{p_1,p_2,p_3,p_4\}$ can be selected with the appropriate halfplane range, hence any valid coloring must assign different colors to the four points. The proof of the dual bound is analogous: it suffices to consider the ranges that contain exactly two points of $\{p_1,p_2,p_3,p_4\}$.

For higher values of $k$ it suffices to replace each point $p_i$ for a cluster of $\lfloor k/2 \rfloor$ points. That is, we have $4\lfloor k/2 \rfloor$ points clustered into four groups so that any two groups can be covered by one range. By the pigeonhole principle, any coloring that uses strictly less than $4\lfloor k/2 \rfloor$ colors must have two points with the same color. The range containing them (and any other $k-2$ points) will have at most $k-1$ colors, hence will not be polychromatic.
%Let $i$ and $j$ be the clusters that these points belong to (if they belong to the same cluster, let $j$ be any other cluster). The range containing clusters $i$ and $j$ will have $k$ points but only $k-1$ colors, giving a contradiction. %\footnote{We note that if we replace one of the points by a cluster of $\lceil k/2 \rceil$ instead, we obtain a slightly better lower bound: $k+ 2\lfloor k/2 \rfloor$. This is a change of a unit for some values of $k$, hence we remove it for simplicity.}. 
%\qed\end{proof}

%as follows: let $\quer_{1,2}$ be the range containing points $p_1$ and $p_2$ in the proof of Lemma \ref{lem_lowerprim} (analogously we define the ranges $\quer_{1,3}$ and $\quer_{1,4}$). Consider the problem instance for $k=2$  with these three ranges and the original range $\quer$: any two ranges have a common endpoint and hence must have a different color assigned. Whenever $k>2$, it suffices to replace the ranges with a . 
\qed\end{proof}
Observe that parallelograms are the only ranges that do not have three or more distinct normal directions (in this case, we can show a weaker $3\lfloor k/2 \rfloor$ lower bound). In particular, the results of Sections \ref{sec_primal} and \ref{sec_dual} are tight for any range other than a parallelogram.  
%In this case, we can use a similar reasoning to obtain a weaker $c_\quer(k)\geq 3\lfloor k/2 \rfloor$ lower bound.
Also notice that, since triangle containment posets are 3-dimensional, the lower bound also applies to $c_3(k)$.

%Thus, for $k=2$ it remains open to tighten the gap between the lower and upper bounds for the square. That is, can we three color any set of squares so that any $2$-deep point is colored by ranges of different colors? or does it hold that $c_\quer=4$ for squares as well?

%\begin{corollary}
%$c_3(k)\geq 4\lfloor k/2 \rfloor$.
%\end{corollary}

%\section{Algorithms}
%\begin{lemma}
%Let $S$ be a set of homothets of $\quer$. The graph $G_\pi(\rho(S))$ is the geometric dual of the Voronoi diagram of the point set $\{ \mathrm{center\ }(\quert ) : \quert\in S \}$ with respect to the weighted distance $d$. Furthermore, the graph $G_\pi(\rho(S))$ can be computed in $O(n\log n)$ time.
%\end{lemma}
%\begin{corollary}
%Let $S$ be a set of homothets of $\quer$. There exists an algorithm that finds a suitable coloring of the corresponding dual hypergraph with at most four colors in time $O(n^2)$.
%\end{corollary}
%\begin{proof}
%From the previous lemma, the graph $G_\pi(\rho(S))$ can be computed in $o(n^2)$ time. From the four color theorem, a four-coloring of $G_\pi(\rho(S))$ can be found in time $O(n^2)$.
%\qed\end{proof}

\section{Applications to other coloring problems}\label{sec_appl}

\paragraph{Conflict-free colorings.}
A coloring of a hypergraph is said to be {\em conflict-free} if, for every hyperedge $e$ there is a vertex $v\in e$ whose color is distinct from all other vertices of $e$. Even {\em et al.}~\cite{shakharcf} gave an algorithm for finding such a coloring. Their method repeatedly colors (in the polychromatic sense) the input hypergraph with few colors, and removes the largest color class. By repeating this process iteratively a conflict-free coloring is obtained. The number of colors is at most $\log_{\frac c{c-1}} n$, where $n$ is the number of vertices, and $c$ is the maximum number of colors used at each iteration. Our 4-colorability proof of Theorem~\ref{theo_dual} is constructive and can be computed in $O(n^2)$ time. Hence, by combining both results we obtain the following corollary.
\begin{corollary}
Any dual hypergraph induced by a finite set of $n$ homothets of a compact and convex body in the plane has a conflict-free coloring using at most $\log_{4/3} n \leq 2.41\log_2 n$ colors. Furthermore, such a coloring can be found in $O(n^2\log n )$ time.
\end{corollary}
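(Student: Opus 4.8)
The plan is to feed our constructive four-coloring result (Theorem~\ref{theo_dual}) into the iterative reduction of Even \emph{et al.}~\cite{shakharcf}, which turns a bounded hereditary non-monochromatic coloring into a conflict-free one. The first thing to check is that the required \emph{hereditary} property holds here for free: the subhypergraph of a dual hypergraph induced by a subset $S'\subseteq S$ of the homothets is itself the dual hypergraph of $S'$, because for every point $p$ the hyperedge $\{R\in S:p\in R\}$ restricts to $\{R\in S':p\in R\}$. Hence Theorem~\ref{theo_dual} applies to every induced subhypergraph, and each of them can be colored with $4$ colors so that no hyperedge of size at least two is monochromatic.

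The algorithm then runs in rounds. In round $i$ we four-color the dual hypergraph induced by the homothets not yet removed, give the \emph{final} color $i$ to all homothets of the largest of the four color classes, delete them, and recurse on the rest. Since the largest class contains at least a $1/4$ fraction of the surviving vertices, the number of surviving vertices drops by a factor of at most $3/4$ each round, so the process halts after at most $\log_{4/3} n$ rounds and uses at most $\log_{4/3} n\le 2.41\log_2 n$ colors.

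The remaining point is conflict-freeness. Fix a hyperedge $e$ and let $i^\star$ be the last round in which some vertex of $e$ is removed. Every vertex of $e$ present at the start of round $i^\star$ must be removed in round $i^\star$ itself (otherwise it would be removed later, contradicting the maximality of $i^\star$), hence they all receive the final color $i^\star$, i.e.\ they all lie in the largest color class of round $i^\star$ and therefore share their round-$i^\star$ color. By the hereditary observation above, this surviving subset of $e$ is a hyperedge of the round-$i^\star$ hypergraph; if it had size at least two it would be monochromatic under that round's four-coloring, contradicting Theorem~\ref{theo_dual}. So exactly one vertex of $e$ gets the final color $i^\star$, and since no vertex of $e$ is removed after round $i^\star$, that vertex is uniquely colored within $e$.

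For the running time there are $O(\log n)$ rounds, and each round builds the weighted Voronoi diagram / planar graph $G_\pi(\rho(S'))$ of Section~\ref{sec_dual} and four-colors it in $O(n^2)$ time as noted after Theorem~\ref{theo_dual}, giving $O(n^2\log n)$ overall. The only genuinely delicate step in the whole argument is the verification that restrictions of hyperedges stay hyperedges of the induced subhypergraph, since that is precisely what makes "non-monochromatic at every round'' strong enough to force conflict-freeness; everything else is bookkeeping.
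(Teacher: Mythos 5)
Your proposal is correct and follows essentially the same route as the paper: plugging the constructive $4$-coloring of Theorem~\ref{theo_dual} into the iterative ``color, remove the largest class, recurse'' framework of Even \emph{et al.}~\cite{shakharcf}, which the paper simply cites while you re-derive its key points (the hereditary property of dual hypergraphs and the uniqueness of the last color appearing in each hyperedge). The color bound $\log_{4/3} n$ and the $O(n^2\log n)$ running time are obtained exactly as in the paper.
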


\paragraph{$k$-strong conflict-free colorings.}
Abellanas {\em et al.}~\cite{ABGHNR09} introduced the notion of $k$-strong conflict free colorings, in which every hyperedge $e$ has $\min\{|e|,k\}$ vertices with a unique color. Conflict-free colorings are $k$-strong conflict-free colorings for $k=1$. Recently, Horev, Krakovski, and Smorodinsky~\cite{HKS10} showed how to find $k$-strong conflict-free colorings by iteratively removing the largest color class of a polychromatic coloring with $c(k)$ colors. Again, combining this result with Theorem~\ref{theo_colz} yields the following corollary. 
\begin{corollary}
Any dual hypergraph induced by a finite set of $n$ homothets of a compact and convex body in the plane has a $k$-strong conflict-free coloring using at most $\log_{(1+\frac{1}{6(k-1)})}n$ colors.
\end{corollary}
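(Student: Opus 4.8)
The plan is to combine the polychromatic bound of Theorem~\ref{theo_colz} with the iterative colour-class-removal scheme of Horev, Krakovski and Smorodinsky~\cite{HKS10}, mirroring the conflict-free corollary above but with $c(2)=4$ replaced by $c_3(k)\le 6(k-1)$. Write $c=6(k-1)$. The first observation is that this polychromatic bound is \emph{hereditary}: if $S'\subseteq S$, then the dual hypergraph induced by $S'$ coincides with the sub-hypergraph of the dual hypergraph induced by $S$ obtained by deleting the vertices in $S\setminus S'$, since a point $p$ witnesses the hyperedge $\{R\in S':p\in R\}=\{R\in S:p\in R\}\cap S'$. Hence every vertex-induced sub-hypergraph again lies in the class to which the $6(k-1)$-colour bound applies, which is precisely the input required by~\cite{HKS10}.

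Next I would run the iteration. Set $V_1=S$, and while $V_i\neq\emptyset$ perform round $i$: polychromatically colour the sub-hypergraph induced by $V_i$ with at most $c$ colours, let $C_i$ be a largest colour class of this auxiliary colouring, assign every vertex of $C_i$ the permanent colour $i$, and put $V_{i+1}=V_i\setminus C_i$. Since the auxiliary colouring uses at most $c$ colours, $|C_i|\ge |V_i|/c$, hence $|V_{i+1}|\le |V_i|(1-1/c)$; as $(1-1/c)(1+1/c)<1$ this gives $|V_{i+1}|<|V_i|/(1+1/c)$, so after $t$ rounds fewer than $n/(1+1/c)^{t}$ vertices remain, which is below $1$ once $t\ge\log_{1+1/c}n$. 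The process therefore stops after at most $\lceil\log_{1+1/(6(k-1))}n\rceil$ rounds, and it uses one fresh colour per round, which is the claimed bound.

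It remains to verify that the resulting colouring is $k$-strong conflict free, and here I would appeal to the analysis of~\cite{HKS10}. The essential point is that once $|e\cap V_i|\le k$, the round-$i$ auxiliary colouring assigns pairwise distinct colours to the vertices of $e\cap V_i$, so $C_i$ contains at most one of them; since colour $i$ is introduced only in round $i$, that vertex is the unique vertex of $e$ carrying colour $i$. Following $e$ from the first such round until its vertices are exhausted then yields $\min\{|e|,k\}$ uniquely-coloured vertices of $e$. I expect two main obstacles. On the geometric side, one must confirm that the three-dimensional machinery of Sections~\ref{sec_dual}--\ref{sec_3D}, and hence Theorem~\ref{theo_colz}, really applies, hereditarily, to dual hypergraphs induced by homothets of an arbitrary compact convex body and not merely to triangles, since this is what licenses taking $c=6(k-1)$. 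On the combinatorial side, the delicate step is the transition round in which $|e\cap V_i|$ first drops to $k$ or below, where one must ensure that no uniquely-coloured vertex of $e$ is lost; handling it carefully, together with the elementary estimate $(1-1/c)(1+1/c)<1$, is what turns the per-round shrinkage factor $1-1/c$ into the stated logarithm base $1+1/(6(k-1))$.
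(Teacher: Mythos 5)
Your proposal is correct and takes essentially the same route as the paper, whose proof is exactly the one-line combination of the iterative largest-colour-class-removal scheme of~\cite{HKS10} with the polychromatic bound $c=6(k-1)$ of Theorem~\ref{theo_colz}, and your arithmetic $(1-1/c)\leq 1/(1+1/c)$ correctly produces the stated base $1+\frac{1}{6(k-1)}$. The geometric obstacle you flag is real, but it is inherited from the paper itself rather than a defect of your argument: Theorem~\ref{theo_colz} is proved only for three-dimensional hypergraphs (hence, via Corollary~\ref{cor_triangle}, for homothets of a triangle), and the paper applies it to homothets of an arbitrary compact convex body without further justification, exactly as you do---so spotting that this step needs confirmation is to your credit.
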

\paragraph{Choosability.}
Cheilaris and Smorodinsky~\cite{CS10} introduced the notion of choosability in geometric hypergraphs. A hypergraph with vertex set $V$ is said to be $k$-choosable whenever for any collection $\{L_v\}_{v\in V}$ of subsets of positive integers of size at least $k$, the hypergraph admits a proper coloring, where the color of vertex $v$ is chosen from $L_v$. Our construction of Section~\ref{sec_dual} provides a planar graph, and planar graphs are known to be 5-choosable. This directly yields the following result.
\begin{corollary}
Any dual hypergraph induced by a finite set of homothets of a convex body in the plane is 5-choosable.
\end{corollary}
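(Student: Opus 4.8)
The plan is to recycle the planar graph $G_\pi(\rho(S))$ constructed in Section~\ref{sec_dual}, but to feed it into the theorem that every planar graph is $5$-choosable (Thomassen) in place of the Four Color Theorem, and then to transfer the resulting list colouring from the graph to the hypergraph exactly as proper colourings were transferred in the proof of Theorem~\ref{theo_dual}. The point that makes the transfer work is Lemma~\ref{lem_enough}: whenever a point $p$ is covered by at least two homothets, two of those homothets are adjacent in $G_\pi(\rho(S))$, so any proper (list) colouring of $G_\pi(\rho(S))$ automatically supplies two distinct colours among the ranges containing $p$, i.e. leaves no hyperedge monochromatic.

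Concretely, I would first dispose of the case in which no range of $S$ is contained in another, so that Lemmas~\ref{lem_dualplanar} and~\ref{lem_enough} apply verbatim. Given lists $\{L_R\}_{R\in S}$ with $|L_R|\geq 5$, I would use the $5$-choosability of the planar graph $G_\pi(\rho(S))$ to pick $c(R)\in L_R$ forming a proper colouring of $G_\pi(\rho(S))$; Lemma~\ref{lem_enough} then shows that no hyperedge $S_p=\{R\in S: p\in R\}$ with $|S_p|\geq 2$ is monochromatic. For a general $S$ I would mimic the proof of Theorem~\ref{theo_dual}: let $I\subseteq S$ be the set of homothets included in some other homothet, note that $S\setminus I$ has no inclusions, list-colour the hypergraph induced by $S\setminus I$ as above, and then for each $R\in I$ pick a colour from $L_R$ avoiding the colour of a fixed range $R''\in S\setminus I$ with $R\subseteq R''$ (a maximal element of the inclusion chain through $R$). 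This choice is always possible since $|L_R|\geq 5\geq 2$, and for any $p$ with $|S_p|\geq 2$ either $p$ is doubly covered already inside $S\setminus I$ (handled above) or $p$ lies in some $R\in I$, whence $p\in R\subseteq R''$ and $c(R)\neq c(R'')$.

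I do not anticipate a genuine obstacle here, since essentially all the content lives in Section~\ref{sec_dual}; the two things to get right are invoking the $5$-choosability of planar graphs and the inclusion bookkeeping, the latter being routine once one observes that a list of size at least two always lets one dodge a single forbidden colour. It is worth remarking, though, that the bound $5$ (rather than $4$, as in Theorem~\ref{theo_dual}) is forced by this route: planar graphs are not $4$-choosable in general, so the bound here is pinned to the list chromatic number of planar graphs rather than to their chromatic number.
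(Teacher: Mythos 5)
Your proposal is correct and follows essentially the same route as the paper: replace the Four Color Theorem in the construction of Section~\ref{sec_dual} by Thomassen's $5$-choosability of planar graphs applied to $G_\pi(\rho(S))$, with Lemma~\ref{lem_enough} transferring the list colouring to the hypergraph. Your explicit handling of the nested homothets (choosing a colour from the list avoiding that of a containing range, as in the proof of Theorem~\ref{theo_dual}) is a detail the paper leaves implicit but is exactly the intended argument.
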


\bibliographystyle{plain}
\bibliography{fourCol}
\end{document}